\documentclass[journal,twoside,web]{ieeecolor}
\usepackage{generic}
\usepackage{cite}
\usepackage{amsmath,amssymb,amsfonts}
\usepackage{algorithmic}
\usepackage{graphicx}
\usepackage{hyperref}
\hypersetup{hidelinks=true}
\usepackage{textcomp}
\usepackage{comment}
\def\BibTeX{{\rm B\kern-.05em{\sc i\kern-.025em b}\kern-.08em
    T\kern-.1667em\lower.7ex\hbox{E}\kern-.125emX}}

\usepackage[linesnumbered,ruled,vlined]{algorithm2e}
\usepackage{subcaption}
\newtheorem{theorem}{Theorem}

\usepackage{cleveref}
\let\labelindent\relax
\usepackage{enumitem}

\newcommand{\varIID}{\hat{\zeta}^2_{\mathrm{iid}}}

\newcommand{\varConsistent}{\hat{\zeta}^2}

\SetCommentSty{mycommfont}

\begin{document}
\title{Consistent Variance Estimation for Q-Function Estimators in Finite-Horizon MDP Tree Search}
\author{~~~Zhenyu Yue, Jie Xu, \IEEEmembership{Senior Member, IEEE}, Chun-Hung Chen, \IEEEmembership{Fellow, IEEE}, Hadi El-Amine, and Michael C. Fu, \IEEEmembership{Fellow, IEEE} 
\thanks{This work was supported in part by the U.S. National Science Foundation under Grants 2123683, 2228603, 2411248, 2436217. }
\thanks{
Zhenyu Yue and Michael C. Fu are with the University of Maryland, College Park, MD 20742 USA (e-mail: zyyue@umd.edu, mfu@umd.edu).}
\thanks{Jie Xu, Chun-Hung Chen, Hadi El-Amine are with George Mason University, Fairfax, VA 22030 USA (e-mail: jxu13@gmu.edu, cchen9@gmu.edu, helamine@gmu.edu).}
\thanks{This work has been submitted to the IEEE for possible publication. Copyright may be transferred without notice, after which this version may no longer be accessible.}
}

\maketitle

\begin{abstract}
We study the variance of Q-function estimators in finite-horizon, finite-state Markov decision process (MDP) tree search. We show that the variance decomposes into three components attributed to the immediate reward collected, probabilistic state transitions, and uncertainty in future state value function estimates. Using this decomposition, we show that the sample variance estimator based on the assumption of i.i.d. paths is biased, underestimating the true variance, and the bias does not vanish in the limit. We then propose a recursive variance estimator that is consistent. To enable efficient storage and computation, we derive an equivalent implementation of the recursive estimator using only node-local statistics that can be iteratively updated. This consistent variance estimator is integrated into two Monte Carlo Tree Search (MCTS) sampling procedures for finite-horizon MDPs. In numerical examples from inventory control and kidney paired donation matching, the new estimator improves the performance of the MCTS algorithm relative to a baseline that uses the i.i.d.-based sample variance estimator.
\end{abstract}

\begin{IEEEkeywords}
Markov decision processes, Monte Carlo tree search, reinforcement learning, Q-function estimation
\end{IEEEkeywords}

\section{Introduction}
\label{sec:introduction}
\IEEEPARstart{R}{einforcement} learning and simulation-based planning are widely used tools for sequential decision-making under uncertainty~\cite{sutton2018reinforcement,bertsekas1996neuro,powell2007approximate}. We focus on tree-structured algorithms for finite-horizon Markov decision processes (MDPs), which are generally more sample-efficient than policy-gradient algorithms, although at the expense of increased algorithmic complexity and higher computation cost. Here, by ``tree-structured,'' we mean methods---such as Monte Carlo Tree Search (MCTS)---that operate on a search tree and update value estimates via backpropagation based on the Bellman equation. In our setting, we are not restricted to full single-path rollouts: at each step, the algorithm may branch further or terminate a path early, and these decisions can be chosen adaptively. In these algorithms, and in reinforcement learning more broadly, uncertainty estimates play a central role in determining the quality of value estimates and guiding sample allocations.

Uncertainty estimates in tree-structured algorithms can be grouped into two major categories: range-based and variance-based methods. Range-based methods assume that rewards are bounded, typically with known bounds. Hoeffding's inequality provides the basis to construct confidence bounds for value estimators based on the sample size and reward bounds. These confidence bounds are the foundations of widely used algorithms such as UCB (for multi-armed bandit problems), UCT (UCB for trees), and variants such as UCBVI-CH~\cite{auer2002finite,KoSz06,azar2017minimax,browne2012survey,fu2024tutorial}.

An alternative to the bounded rewards assumption is finite variance of the random rewards. One variance-based method is optimal computing budget allocation adapted to MCTS (OCBA-MCTS), which uses the mean and variance of state-action value estimators (also known as Q-function estimators) to determine the sampling allocation that approximately maximizes the probability of correctly selecting the best action~\cite{li2019monte,li2022optimal}. OCBA-MCTS has also been shown to outperform UCT in simulation case studies on inventory control, tic-tac-toe, and Othello~\cite{fu2024tutorial,li2019monte,li2022optimal}. Other variance-aware methods combine range and variance information. Among them, UCB1-Tuned is a heuristic variant of UCB1 that incorporates empirical variance in its exploration term~\cite{auer2002finite}. A more sophisticated UCB variant, UCBVI-BF, employs Bernstein–Freedman concentration inequalities to incorporate variance-dependent exploration bonuses for value estimates and improved regret bounds~\cite{azar2017minimax}.


Despite the benefits of using variance in MDP algorithms, to the best of our knowledge, existing algorithms do not use a variance estimator that propagates uncertainty period by period for Q-function estimates in finite-horizon tree-structured algorithms. OCBA-MCTS uses the sample variance of path returns, treating them as if they were independent and identically distributed (i.i.d.), which ignores correlations induced by shared value function estimates across paths. In Section~\ref{sec:variance}, we show that this approach underestimates the actual variance in Q-function estimators. Existing variance analysis methods largely focus on the distributional properties of the cumulative return under a fixed policy~\cite{sobel1982variance,sobel1994meanvariance,mannor_tsitsiklis_2011} to inform risk-averse decision making, e.g., penalizing high-variance policies, but do not provide variance estimates needed to guide sample allocation within tree search.

In this paper, we analyze the variance of Q-function estimators in finite-horizon MDPs and develop a consistent estimator together with an efficient backpropagation method for tree-structured algorithms. Section~\ref{sec:variance} presents the main theorem that decomposes the Q-function estimator variance into three components arising from uncertainty in rewards, transition dynamics, and next-period state value function estimates. We propose a new recursive variance estimator in Section~\ref{sec:consistent} and show its consistency. To efficiently implement the recursive variance estimator, Section~\ref{sec:efficient} presents an iterative node-local updating scheme that allows for efficient storage and updating. Section~\ref{sec:LFOCBA} shows how the new consistent estimator can be incorporated into OCBA-MCTS and UCB1-Tuned. In Section~\ref{sec:examples}, we report numerical results that demonstrate the improved performance of both OCBA-MCTS and UCB1-Tuned for an inventory control problem and a kidney paired donation problem. We conclude the paper and discuss possible future research in Section~\ref{sec:conclusion}.

\section{Variance Propagation in MDPs}
\label{sec:variance}
Consider an undiscounted finite-horizon, finite-state MDP with components
\((S_{\mathrm{phys}}, \{\mathcal{A}_t\}, \{P_t\}, \{R_t\}, H)\). The process starts from an initial state \(X_0=x_0\in S_{\mathrm{phys}}\) and is in a state \(X_t\in S_{\mathrm{phys}}\) at each period \(t=0,1,\dots,H\), with the horizon being \(H\in\mathbb{N}\). While in a state $X_t$, an action \(a_t\) is chosen from a finite feasible action set
\(\mathcal{A}_t(X_t)\). Conditional on \(X_t = x\) and \(a_t = a\), the next-period state
\(X_{t+1}\) is a random element of \(S_{\mathrm{phys}}\) with transition
probabilities
\(P_t(x'\mid x,a)=\mathbb{P}(X_{t+1}=x'\mid X_t=x,A_t=a)\), and the one-step reward
is a random variable \(R_t(x,a,X_{t+1})\). Since we do not assume time homogeneity, the action sets, transition probabilities, and rewards may depend on time. A deterministic Markov policy is a sequence
\(\pi=(\pi_0,\dots,\pi_{H-1})\) of functions on \(S_{\mathrm{phys}}\), where
each \(\pi_t\) satisfies \(\pi_t(x)\in \mathcal{A}_t(x)\) for all \(x\in S_{\mathrm{phys}}\).
Define the optimal value functions
\[
V_t(x)
  := \max_{\pi} \mathbb{E}^{\pi}\!\left[ \sum_{u=t}^{H-1}
        R_u\bigl(X_u,\pi_u(X_u),X_{u+1}\bigr)
     \,\Big|\, X_t = x \right],
\]
where the expectation is taken over the random states and rewards generated by actions $a_u = \pi_u(X_u) $, and
\[
\begin{aligned}
Q_t(x,a)
  &:= \mathbb{E}\!\bigl[
        R_t(x,a,X_{t+1}) + V_{t+1}(X_{t+1}) 
        | X_t = x, A_t = a \bigr],
\end{aligned}
\] with terminal condition \(V_H(x)=0\) for all \(x\in S_{\mathrm{phys}}\).
Our objective is to choose a policy that attains the maximum expected cumulative reward
from \(x_0\); we call any policy achieving this an optimal policy,
that is, any \(\pi^{*} \in \operatorname*{arg\,max}_{\pi} \mathbb{E}^{\pi}\!\left[ \sum_{u=0}^{H-1}
        R_u\bigl(X_u,\pi_u(X_u),X_{u+1}\bigr)
     \,\Big|\, X_0 = x_0 \right]\).

For notational convenience we now absorb time into the state. Let
\(S := \{(t,x) : t=0,\dots,H,\ x\in S_{\mathrm{phys}}\}\) denote the augmented
state space, and let \(\mathcal{T} := \{H\}\times S_{\mathrm{phys}}\) be the set of
terminal states. For \(s=(t,x)\in S\) we write \(\mathcal{A}(s):=\mathcal{A}_t(x)\). For
\(s=(t,x)\notin\mathcal{T}\) and \(a\in \mathcal{A}(s)\), let
\(S'=(t+1,X_{t+1})\) denote the random next augmented state corresponding to the
physical next-period state \(X_{t+1}\), with
\(\mathbb{P}(S'=(t+1,x')\mid s,a)=P_t(x'\mid x,a)\), and define
\(R(s,a,S'):=R_t(x,a,X_{t+1})\).
We then write \(V(s):=V_t(x)\) and \(Q(s,a):=Q_t(x,a)\). Bellman’s principle of optimality implies that these optimal value functions satisfy, for every nonterminal state $s$,
\[ Q(s,a) = \mathbb{E}\bigl[R(s,a,S') + V(S')\bigr],\] \[ V(s) = \max_{a\in \mathcal{A}(s)} Q(s,a), \] with \(V(s)=0\) for all \(s\in\mathcal{T}\). Thus, any policy that, at every nonterminal state $s$, chooses an action $a\in A(s)$ attaining $\max_{a\in A(s)}Q(s,a)$ is optimal. From this point on, we work with the
augmented notation and suppress explicit time indices.

When the expectations in the Bellman equations cannot be evaluated analytically, a
common numerical approach is to approximate them by Monte Carlo sampling and apply
dynamic programming backward from terminal states, using \(V(s)=0\) on
\(\mathcal{T}\). For any state \(s\) and action \(a\in \mathcal{A}(s)\), let
\(\{s'_i\}_{i=1}^{k}\) enumerate the possible successor states with
\(P(s'_i\mid s,a)>0\). Suppose there are \(N(s,a)\) Monte Carlo samples
\(\bigl(r^{(j)}, s'_{I_j}\bigr)\), \(j=1,\dots,N(s,a)\), where
\(I_j\in\{1,\dots,k\}\) is drawn according to \(\mathbb{P}(I_j=i)=P(s'_i\mid s,a)\)
and \(r^{(j)}\) is an independent draw with conditional distribution equal to the distribution
of \(R(s,a,s'_{I_j})\). Applying the Bellman equations leads to the recursive
estimators
\begin{subequations}\label{eq:Q-est}
\begin{align}
  \hat{Q}(s,a)&=\frac{1}{N(s,a)}
      \sum_{j=1}^{N(s,a)}
      \bigl(r^{(j)}+\hat{V}(s'_{I_j})\bigr), \label{eq:Q-est-a}\\
  \hat{V}(s)  &=\max_{a\in \mathcal{A}(s)}\hat{Q}(s,a), \label{eq:Q-est-b}
\end{align}
\end{subequations}
with \(\hat{V}(s)=0\) for \(s\in\mathcal{T}\). Equation~\eqref{eq:Q-est-a} has the
form of a sample average. It is straightforward to show \cite{li2022optimal} that
under MCTS selection policies such as UCT and OCBA, \(\hat{Q}(s,a)\) is a consistent
estimator of the true value \(Q(s,a)\). To apply a selection policy such as
OCBA, the variance of \(\hat{Q}(s,a)\) must be estimated. However, \(\hat{Q}(s,a)\)
is not the sample average of \(N(s,a)\) i.i.d.\ terms because of the dependence
introduced by \(\hat{V}(s'_{I_j})\), which is itself an estimator and therefore a
random variable. In the following, we derive a recursive equation to
estimate the variance of \(\hat{Q}(s,a)\).

\subsection{Previous Work}
The sample average form of $\hat{Q}(s,a)$ seems to suggest that its variance,  which we denote by
\begin{equation} \label{eq:zeta}
\zeta^{2}(s,a)=\operatorname{Var} \bigl[\hat{Q}(s,a)\bigr], 
\end{equation}
can also be estimated using the sample variance formula: 
\begin{equation}\label{eq:Var-approx}
  \varIID(s,a)=
  \frac{\displaystyle
        \sum_{j=1}^{N(s,a)}
        \bigl[r^{(j)}+\hat{V}(s'_{I_j})
          -\hat{Q}(s,a)
        \bigr]^{2}}
       {N(s,a)\bigl(N(s,a)-1\bigr)} .
\end{equation}
%

The MCTS-OCBA algorithm introduced in \cite{li2019monte,li2022optimal} took this approach. However, $\varIID(s,a)$ is biased whenever different samples transition to the same successor state ($s_{I_{j}}'=s_{I_{l}}'$ with $j\neq l$), because the terms $r^{(j)}+\hat{V}(s_{I_{j}}')$ and $r^{(l)}+\hat{V}(s_{I_{l}}')$ are positively correlated due to the shared value estimate $\hat{V}$, leading to an underestimation of the variance, and this bias does not vanish in the limit. Although rewards and state transitions are sampled independently each time, the value function estimate $\hat{V}(s'_{i})$ is shared across all samples that transition into state $s'_i$. Consider an extreme case where the state  transition is fixed, i.e., $s'_{I_j} = s'_i$ for all $j$. We then have $\hat{Q}(s,a) = \hat{V}(s'_i)+\sum_{j=1}^{N(s,a)} {r^{(j)}}/{N(s,a)}$. Therefore, $\varIID(s,a)$ is only determined by the variance of the rewards $\{r^{(j)}\}$ and the variance in the estimate of $\hat{V}(s')$ is omitted. Thus, (\ref{eq:Var-approx}) is an underestimate of $\zeta^2(s,a)$. Because this underestimation persists even as $N(s,a) \to \infty$ (unless the next-period value variance vanishes), the estimator defined in \eqref{eq:Var-approx} is neither unbiased nor consistent.

\subsection{Q-function Estimation Variance}
To calculate the theoretical variance of $\hat{Q}(s,a)$, we consider $N(s,a)$ samples $(R^{(j)},s'_{I_j}), j = 1,..., N(s,a)$, and introduce the following notation: 

\begin{itemize}
  \item $p_i  =  \mathbb{P}\bigl(S' = s'_i \mid s,a\bigr)$, the probability of
        transitioning to state $s'_i$ when action $a$ is taken in state $s$,
        for $i=1,\dots,k$.
  \item $\alpha_i$ and $\tau_i^{2}$ — mean and variance of the conditional
        reward $R(s,a,s'_i)$.

\end{itemize}

Throughout the variance analysis, assume that the sample sizes used to
construct the successor value estimates $\{\hat V(s'_i)\}_{i=1}^k$ are chosen
without reference to the current-period samples
$\{(I_j,R^{(j)})\}_{j=1}^{N(s,a)}$ (for example, by fixing a predetermined
number of rollouts at each successor). This condition is sufficient for the
independence assumptions in Theorem~\ref{thm:mdp_variance}. In many practical
online MCTS implementations, however, the samples used to update $\hat V$ and
$\hat Q$ are intertwined, so the variance expression in
Theorem~\ref{thm:mdp_variance} should be viewed as that of an idealized
two-stage sampling scheme.

\begin{theorem}[Variance of $\hat{Q}(s,a)$]  \label{thm:mdp_variance}
If the successor states value estimators $\{\hat{V}(s'_i)\}_{i=1}^k$ are independent across $i$, and jointly independent of the current period transitions and rewards $\{I_j,\,R^{(j)}\}_{j=1}^{N(s,a)}$, then 
\begin{IEEEeqnarray}{rCl}
\zeta^{2}(s,a)
&=& \sum_{i=1}^{k} p_i^{2}\,\operatorname{Var}(\hat{V}(s'_i)) \nonumber \\
&& + \frac{1}{N(s,a)} \Bigl[\sum_{i=1}^{k} p_i(1-p_i)\operatorname{Var}(\hat{V}(s'_i)) + \sum_{i=1}^{k} p_i\,\tau_i^{2} \nonumber \\
&&  + \sum_{i=1}^{k} p_i \bigl(\alpha_i + \mathbb{E}[\hat{V}(s'_i)] - \mathbb{E}[\hat{Q}(s,a)]\bigr)^{2} \Bigr]. \label{eq:Var-truth}
\end{IEEEeqnarray}
\end{theorem}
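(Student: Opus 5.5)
Write $N=N(s,a)$, $\hat V_i=\hat V(s'_i)$, $v_i=\mathbb{E}[\hat V_i]$, $\sigma_i^2=\operatorname{Var}(\hat V_i)$, and let $C_i=\sum_{j=1}^N \mathbf{1}\{I_j=i\}$, so that $(C_1,\dots,C_k)\sim\mathrm{Multinomial}(N;p_1,\dots,p_k)$. Grouping the samples by successor state gives
\[
\hat Q(s,a)=\frac1N\sum_{j=1}^N R^{(j)}+\sum_{i=1}^k \frac{C_i}{N}\hat V_i .
\]
The plan is to apply the law of total variance, conditioning on the current-period data $\mathcal{F}=\sigma\bigl(\{I_j,R^{(j)}\}_{j=1}^N\bigr)$. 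By hypothesis, $\{\hat V_i\}$ are mutually independent and independent of $\mathcal{F}$. Conditional on $\mathcal{F}$, the first sum and the weights $C_i/N$ are fixed, so
\[
\mathbb{E}[\hat Q\mid\mathcal F]=\frac1N\sum_j R^{(j)}+\sum_i \frac{C_i}{N}v_i,
\qquad
\operatorname{Var}(\hat Q\mid\mathcal F)=\sum_i \frac{C_i^2}{N^2}\sigma_i^2 .
\]

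For the expected conditional variance, use the multinomial second moment $\mathbb{E}[C_i^2]=Np_i(1-p_i)+N^2p_i^2$. This gives
\[
\mathbb{E}\bigl[\operatorname{Var}(\hat Q\mid\mathcal F)\bigr]=\sum_i p_i^2\sigma_i^2+\frac1N\sum_i p_i(1-p_i)\sigma_i^2,
\]
which produces the first term of \eqref{eq:Var-truth} and the first bracketed term.

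For the variance of the conditional mean, note that $\mathbb{E}[\hat Q\mid\mathcal F]=\frac1N\sum_j Y_j$ with $Y_j=R^{(j)}+v_{I_j}$. The pairs $(I_j,R^{(j)})$ are i.i.d. across $j$, so the $Y_j$ are i.i.d. and $\operatorname{Var}(\mathbb{E}[\hat Q\mid\mathcal F])=\operatorname{Var}(Y_1)/N$. Apply the law of total variance a second time, now conditioning on $I_1$:
\[
\operatorname{Var}(Y_1)=\sum_i p_i\tau_i^2+\sum_i p_i(\alpha_i+v_i-\mu)^2,
\qquad
\mu=\mathbb{E}[Y_1]=\sum_i p_i(\alpha_i+v_i).
\]
It remains to identify $\mu$ with $\mathbb{E}[\hat Q(s,a)]$. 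This follows from $\mathbb{E}[\hat Q]=\mathbb{E}\bigl[\mathbb{E}[\hat Q\mid\mathcal F]\bigr]=\mathbb{E}[Y_1]$. Adding the two pieces yields \eqref{eq:Var-truth}.

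There is no real obstacle here; the argument is bookkeeping. The one step needing care is justifying the conditional variance formula. It uses two facts: $\{\hat V_i\}$ is jointly independent of $\mathcal F$, so conditioning does not change their distribution, and the $\hat V_i$ are independent across $i$, so there are no cross-covariance terms. I would state this explicitly, using the remark preceding the theorem that the successor sample sizes do not depend on the current-period samples. The multinomial moment identity is standard.
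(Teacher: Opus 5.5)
Your proof is correct, but it takes a different and somewhat more economical route than the paper's.

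The paper writes $\hat Q(s,a)=(S_R+S_Y)/m$, with $S_R=\sum_j R^{(j)}_{I_j}$ and $S_Y=\sum_j \hat V(s'_{I_j})$, and computes three pieces separately:
\begin{itemize}
\item $\operatorname{Var}(S_R)$, by conditioning on the transitions $I$;
\item $\operatorname{Var}(S_Y)$, by conditioning on the value estimates; this needs the full multinomial covariance matrix and then the expectations of $\sum_i p_i\hat V(s'_i)^2$ and $(\sum_i p_i\hat V(s'_i))^2$;
\item $\operatorname{Cov}(S_R,S_Y)$, by conditioning on $I$.
\end{itemize}
It then recombines $\sum_i p_i\alpha_i^2-(\sum_i p_i\alpha_i)^2$, the analogous value terms, and the cross term into the centered square.

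You instead condition once on all current-period data $\mathcal F$. The expected conditional variance then isolates exactly the two successor-uncertainty terms. It needs only the marginal moments $\mathbb{E}[C_i^2]$, because cross terms vanish by independence across $i$. The variance of the conditional mean is $\operatorname{Var}(Y_1)/N$ for the i.i.d.\ variables $Y_j=R^{(j)}+\mathbb{E}[\hat V(s'_{I_j})]$. This gives the reward and transition terms already in centered form, with no covariance term and no final rearrangement.

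A side benefit is that your decomposition makes Theorem~\ref{thm:approx-bias} almost immediate. Fixing the successor values kills the first piece and leaves $\operatorname{Var}(Y_1)$ with $v_i$ in place of $\mathbb{E}[\hat V(s'_i)]$. The paper's route keeps the reward sum and the value sum separate, so intermediate results such as \eqref{eq:varsr} are available on their own, at the cost of more algebra.

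Both arguments rest on the same hypotheses: i.i.d.\ pairs $(I_j,R^{(j)})$ across $j$, and successor estimates independent across $i$ and independent of the current-period data. Your shortcut therefore loses nothing.
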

\vspace{3pt}
\begin{proof}
We introduce some shorthand notation, $Y_i := \hat{V}(s'_i), R^{(j)}_{I_j} := R^{(j)}(s,a,s'_{I_j}), m := N(s,a), I := \{I_j, j = 1,...,m\}$. So we have $\hat{Q}(s,a)=\sum_{j=1}^{m}R^{(j)}_{I_j}/m + \sum_{j=1}^{m} Y_{I_j}/m$. Further denote $N_i :=  \#\{j: I_j=i\}, S_R:=\sum_{j=1}^{m}R^{(j)}_{I_j}, S_Y:=\sum_{j=1}^{m}Y_{I_j}$. Because of the assumed independence of the reward and the value function estimates, 
\begin{equation*} 
    \zeta^2(s,a) = \frac{\operatorname{Var}(S_R)}{m^2} +  \frac{\operatorname{Var}(S_Y)}{m^2} + \frac{2\operatorname{Cov}(S_R, S_Y)}{m^2}. 
\end{equation*}

We first calculate the variance of $S_R$. Using the conditional variance formula, we have $ \operatorname{Var}(S_R)=\mathbb{E} \bigl[\operatorname{Var}(S_R \mid I)\bigr]+\operatorname{Var} \bigl(\mathbb{E}[S_R \mid I]\bigr)$. For the first term, because $\operatorname{Var}(S_R \mid I) = \sum_{j=1}^{m} \tau_{I_j}^{2}$, we have $\mathbb{E} \bigl[\operatorname{Var}(S_R \mid I)\bigr]
=\sum_{i=1}^{k}m p_i\tau_i^{2}$. To calculate the second term, we first notice that $\mathbb{E}[S_R \mid I]=\sum_{i=1}^{k}\alpha_i N_i$. Because $N_i$ is multinomial with probability $p_i$, applying the covariance matrix of a multinomial random vector yields 
 \[
  \operatorname{Var} \bigl(\mathbb{E}[S_R \mid I]\bigr)
  =m \left[\sum_{i=1}^{k}p_i\alpha_i^{2}
            - \left(\sum_{i=1}^{k}p_i\alpha_i\right)^{2}\right].
\]
Hence, we have
\begin{equation} \label{eq:varsr}
  \operatorname{Var}(S_R)
  =m \left[\sum_{i=1}^{k}p_i\tau_i^{2}
            +\sum_{i=1}^{k}p_i\alpha_i^{2}
            - \left(\sum_{i=1}^{k}p_i\alpha_i\right)^{2}\right].
\end{equation}
We next derive the variance of $S_Y$, again using the conditional variance formula, $\operatorname{Var}(S_Y) =\mathbb{E} \bigl[\operatorname{Var}(S_Y \mid Y)\bigr] + \operatorname{Var} \bigl(\mathbb{E}[S_Y \mid Y]\bigr)$. It is easy to see that $\operatorname{Var} \bigl(\mathbb{E}[S_Y \mid Y]\bigr) = m^{2}\sum_{i=1}^{k}p_i^{2}\operatorname{Var}(\hat{V}(s'_i))$. Applying the multinomial covariance matrix gives 
\[
  \operatorname{Var}(S_Y \mid Y)
  =m \left[\sum_{i=1}^{k}p_iY_i^{2}
            - \left(\sum_{i=1}^{k}p_iY_i\right)^{2}\right].
\]
Because of the independence assumption of $Y_i$, it is straightforward to show $\mathbb{E} \Bigl[\sum_{i=1}^{k}p_iY_i^{2}\Bigr] = \sum_{i=1}^{k}p_i(\operatorname{Var}(\hat{V}(s'_i))+\mathbb{E}[\hat{V}(s'_i)]^{2})$, $\mathbb{E} \Bigl[\left(\sum_{i=1}^{k}p_iY_i\right)^{2}\Bigr]
  =\sum_{i=1}^{k}p_i^{2}\operatorname{Var}(\hat{V}(s'_i)) + \left( \sum_{i=1}^{k} p_i \mathbb{E}[\hat{V}(s'_i)]\right)^{2}$. Combining all three terms, we have 
\begin{multline} \label{eq:varsy}
\operatorname{Var}(S_Y)
 = m\Bigl[\sum_{i=1}^{k} p_i(1-p_i)\operatorname{Var}(\hat{V}(s'_i)) + \sum_{i=1}^{k} p_i\mathbb{E}[\hat{V}(s'_i)]^{2}  \\
  - \Bigl(\sum_{i=1}^{k} p_i\mathbb{E}[\hat{V}(s'_i)]\Bigr)^{2}\Bigr] + m^{2}\sum_{i=1}^{k} p_i^{2}\operatorname{Var}(\hat{V}(s'_i)).
\end{multline}    

Finally, we compute $\operatorname{Cov}(S_R,S_Y)$ using the conditional covariance formula, $ \operatorname{Cov}(S_R,S_Y)  = \mathbb{E} \bigl[\operatorname{Cov}(S_R,S_Y \mid I)\bigr] + \operatorname{Cov} \bigl(\mathbb{E}[S_R \mid I], \mathbb{E}[S_Y \mid I]\bigr)$. Given \(I\), $S_R$ and $S_Y$ are independent, and thus the first term is zero. It is easy to observe that the second term is
\begin{multline} \label{eq:cov}
  \operatorname{Cov} \Bigl(\sum_{j=1}^{m}\alpha_{I_j}, \sum_{j=1}^{m}\mu_{I_j}\Bigr)= \\ 
  m \left[\sum_{i=1}^{k}p_i\alpha_i\mathbb{E}[\hat{V}(s'_i)] - \left(\sum_{i=1}^{k}p_i\alpha_i\right)     \left(\sum_{i=1}^{k}p_i\mathbb{E}[\hat{V}(s'_i)]\right)\right].
\end{multline}

Adding up (\ref{eq:varsr})-(\ref{eq:cov}) and rearranging terms, we obtain
\begin{multline} \label{eq:mvarq}
    m^2 \zeta^2(s,a) =  m^{2}\sum_{i=1}^{k} p_i^{2}\operatorname{Var}(\hat{V}(s'_i)) + \\ 
      m\left[\sum_{i=1}^{k} p_i\tau_i^{2} + \sum_{i=1}^{k} p_i(1-p_i)\operatorname{Var}(\hat{V}(s'_i)) \right] + \\
    m\left[ \sum_{i=1}^{k} p_i(\alpha_i+\mathbb{E}[\hat{V}(s'_i)])^{2} - \Bigl(\sum_{i=1}^{k} p_i(\alpha_i+\mathbb{E}[\hat{V}(s'_i)])\Bigr)^{2}\right].
\end{multline}
Because $\mathbb{E}[\hat{Q}(s,a)] = \sum_{i=1}^{k} p_i(\alpha_i+\mathbb{E}[\hat{V}(s'_i)])$, the last term of the right-hand side of (\ref{eq:mvarq}) can be rearranged as $\Bigl[\sum_{i=1}^{k} p_i\bigl(\alpha_i+\mathbb{E}[\hat{V}(s'_i)]-\mathbb{E}[\hat{Q}(s,a)]\bigr)^{2}\Bigr]$. Dividing both sides of (\ref{eq:mvarq}) by \(m^{2}\) yields the final result.
\end{proof}

We now make some observations on the composition of $\zeta^2(s,a)$, which reveals how different sources of randomness affect the quality of $\hat{Q}(s,a)$ as an estimator of $Q(s,a)$. 
\begin{itemize}
  \item Uncertainty in next period value estimates: 
        the term $\sum_{i=1}^{k} p_i^{2}\operatorname{Var}(\hat{V}(s'_i))$ propagates the uncertainty already present in the value estimates $\hat{V}(s'_i)$. Increasing $N(s,a)$ does not necessarily reduce this term, because not all samples would transition to state $s'_i$. 
  \item Variability in empirical probability estimates: 
        the term 
        $\sum_{i=1}^{k} p_i(1-p_i)\operatorname{Var}(\hat{V}(s'_i))/N(s,a)$  accounts for the variance in estimating the true transition probability $p_i$. This term is negligible, because it is the product of two vanishing quantities: the inverse sample size $1/N(s,a)$ and the successor variance $\operatorname{Var}(\hat{V}(s'_i))$.
  \item Random reward variance: the term $\sum_{i=1}^{k} p_i\tau_i^{2}/N(s,a)$ is the expected conditional variance of the immediate reward.
  \item State-transition variance: the term \(\sum_{i=1}^k p_i\bigl(\alpha_i+\mathbb{E}[\hat V(s'_i)]-\mathbb{E}[\hat Q(s,a)]\bigr)^2/{N(s,a)}\) is the variance from probabilistic state transitions. 
\end{itemize}

\section{Consistent Estimation} \label{sec:consistent}
We first derive the asymptotic bias of the approximate estimator $\varIID(s,a)$ and then propose a consistent recursive alternative.
\subsection{Asymptotic behavior of \texorpdfstring{$\varIID(s,a)$}{zeta\_approx}}
\label{subsec:approx-asymp}

\begin{theorem}\label{thm:2}
As \(N(s,a)\to\infty\),
\[
  \zeta^2(s,a)
     \;\longrightarrow\;
     \sum_{i=1}^{k} p_i^{2}\operatorname{Var}(\hat{V}(s'_i)),
  \qquad
  \varIID(s,a)
     \;\longrightarrow\; 0.
\]
\end{theorem}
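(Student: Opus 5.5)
The first limit follows directly from Theorem~\ref{thm:mdp_variance}. I will hold the successor estimators $\{\hat V(s'_i)\}$ fixed in distribution as $N(s,a)\to\infty$, which is the idealized two-stage scheme described before that theorem. Then $\operatorname{Var}(\hat V(s'_i))$, $\mathbb{E}[\hat V(s'_i)]$, $\alpha_i$, $\tau_i^2$ and $p_i$ do not depend on $N(s,a)$. Consequently $\mathbb{E}[\hat Q(s,a)]=\sum_i p_i(\alpha_i+\mathbb{E}[\hat V(s'_i)])$ is constant as well. The bracketed quantity in \eqref{eq:Var-truth} is a finite constant, assuming finite reward variances and finite successor variances. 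Multiplied by $1/N(s,a)$ it vanishes, and only $\sum_i p_i^2\operatorname{Var}(\hat V(s'_i))$ remains.

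For $\varIID(s,a)$, I will write it as $\frac{1}{N(s,a)}\cdot \hat\sigma^2$. Here $\hat\sigma^2=\frac{1}{N-1}\sum_j (Z_j-\bar Z)^2$ with $Z_j=r^{(j)}+\hat V(s'_{I_j})$ and $\bar Z=\hat Q(s,a)$. The plan is to show that $\hat\sigma^2$ stays bounded in probability, or even converges, so that dividing by $N$ sends it to $0$. The cleanest route is to condition on $Y=(\hat V(s'_1),\dots,\hat V(s'_k))$. Given $Y$, the pairs $(I_j,r^{(j)})$ are i.i.d. because of the independence assumption, so the $Z_j$ are conditionally i.i.d. with conditional variance
\[
\sigma^2(Y)=\sum_i p_i\tau_i^2+\sum_i p_i(\alpha_i+Y_i)^2-\Bigl(\sum_i p_i(\alpha_i+Y_i)\Bigr)^2 .
\]
The strong law of large numbers, applied conditionally, gives $\hat\sigma^2\to\sigma^2(Y)$ almost surely. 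Since $\sigma^2(Y)<\infty$ almost surely, it follows that $\varIID(s,a)=\hat\sigma^2/N\to 0$ almost surely.

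A weaker but simpler alternative is to take expectations. One computes $\mathbb{E}[\hat\sigma^2\mid Y]=\sigma^2(Y)$, so $\mathbb{E}[\varIID(s,a)]=\mathbb{E}[\sigma^2(Y)]/N\to 0$. Because $\varIID\ge 0$, this gives convergence in $L^1$ and hence in probability.

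The main obstacle is deciding in what sense ``$\to$'' holds and justifying the conditional argument rigorously. For the almost sure version, the conditional SLLN step needs independence of $Y$ from the current-period samples, which is exactly the hypothesis of Theorem~\ref{thm:mdp_variance}. For the $L^1$ version, I need $\mathbb{E}[\sigma^2(Y)]<\infty$, which follows from finite second moments of $\hat V(s'_i)$. I would present the $L^1$/in-probability statement as the main proof and note the almost sure strengthening.

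The conclusion is that $\varIID$ misses the nonvanishing term $\sum_i p_i^2\operatorname{Var}(\hat V(s'_i))$.
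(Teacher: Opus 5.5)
Your proposal is correct and takes the same route as the paper. The first limit is read off from Theorem~\ref{thm:mdp_variance} by letting the $1/N(s,a)$-scaled terms vanish. The second follows from conditioning on the successor values: the augmented samples are then i.i.d., so $N(s,a)\varIID(s,a)$ is an ordinary unbiased sample variance whose expectation does not grow with $N(s,a)$. Your explicit convergence mode ($L^1$ and in probability, strengthened to almost sure via the conditional strong law) and your finite-moment conditions make precise what the paper leaves implicit.
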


\begin{proof}
The first limit follows directly from Theorem~\ref{thm:mdp_variance}. As $N(s,a) \to \infty$, the terms in the variance decomposition scaled by $1/N(s,a)$ vanish, leaving only the term $\sum_{i=1}^{k} p_i^{2}\operatorname{Var}(\hat{V}(s'_i))$.

For the second limit, consider a fixed realization of the successor estimators $\{\hat V(s'_i)\}$. Conditional on these values, the augmented samples $X_j = R^{(j)} + \hat V(s'_{I_j})$ are independent across $j$. In this case, $\varIID(s,a)$ is equal to the usual unbiased sample variance of the $X_j$ divided by $N(s,a)$, so its expectation is of order $N(s,a)^{-1}$ and hence $\varIID(s,a) \to 0$ as $N(s,a)\to\infty$.
\end{proof}

Consequently, with all other parameters fixed, the approximate variance estimator is asymptotically biased except for the last period of the MDP, with the bias given by
\begin{equation} \label{eq:vbias}
  \lim_{N(s,a)\to\infty} \operatorname{Var} \bigl[\hat{Q}(s,a)\bigr] - \varIID(s,a) = \sum_{i=1}^{k} p_i^{2}\operatorname{Var}(\hat{V}(s'_i)).
\end{equation}

\begin{theorem}\label{thm:approx-bias}
Fix a realization of the successor value estimates, denoted by $v = \{v_i\}_{i=1}^k$, where $\hat V(s'_i) = v_i$. Consider $N(s,a)$ samples collected at the current node $(s,a)$. We have
\begin{align}\label{eq:thm3-conditional-mean}
&N(s,a)\mathbb{E}\bigl[\varIID(s,a)
  \,\big|\, \{\hat V(s'_i)\} = \{v_i\}\bigr]\notag\\
&\quad=
\sum_{i=1}^k p_i \tau_i^2 
 +
\sum_{i=1}^k p_i \bigl(\alpha_i + v_i
  - \mathbb{E}[\hat Q(s,a)\mid v]\bigr)^2,
\end{align}
where we write $v = (v_1,\dots,v_k)$ and
\[
\mathbb{E}[\hat Q(s,a)\mid v]
= \sum_{i=1}^k p_i (\alpha_i + v_i).
\]
Moreover,
\begin{equation}\label{eq:thm3-convergence}
N(s,a)\,\varIID(s,a)
\xrightarrow{p}
N(s,a)\mathbb{E}\bigl[\varIID(s,a)
  \,\big|\, \{\hat V(s'_i)\} = \{v_i\}\bigr],
\end{equation}
as \(N(s,a)\to \infty\).
\end{theorem}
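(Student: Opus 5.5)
Conditioning on $\{\hat V(s'_i)\}=\{v_i\}$ turns the problem into a statement about i.i.d.\ samples. Write $m:=N(s,a)$ and $X_j := R^{(j)} + v_{I_j}$ for $j=1,\dots,m$. By the independence assumption of Theorem~\ref{thm:mdp_variance}, the pairs $(I_j,R^{(j)})$ are independent of the successor estimates. Hence, conditional on $v$, the $X_j$ are i.i.d.\ copies of a single random variable $X = R(s,a,s'_I) + v_I$ with $\mathbb{P}(I=i)=p_i$. Definition~\eqref{eq:Var-approx} then reads $m\,\varIID(s,a) = \frac{1}{m-1}\sum_{j=1}^m (X_j-\bar X)^2$, since $\hat Q(s,a)=\bar X$. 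This is the usual unbiased sample variance $S_m^2$ of the $X_j$.

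\textbf{Step 1 (conditional mean).} Unbiasedness of the sample variance for i.i.d.\ data gives $m\,\mathbb{E}[\varIID(s,a)\mid v]=\operatorname{Var}(X\mid v)$. To evaluate $\operatorname{Var}(X\mid v)$, apply the law of total variance, conditioning on $I$:
\begin{itemize}
\item $\mathbb{E}[\operatorname{Var}(X\mid I,v)] = \sum_i p_i\tau_i^2$;
\item $\operatorname{Var}(\mathbb{E}[X\mid I,v]) = \sum_i p_i(\alpha_i+v_i-\mu_v)^2$, where $\mu_v := \sum_i p_i(\alpha_i+v_i)$.
\end{itemize}
Finally, $\mathbb{E}[\hat Q(s,a)\mid v]=\mathbb{E}[\bar X\mid v]=\mu_v$ by linearity, which matches the stated formula. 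This reproduces \eqref{eq:thm3-conditional-mean}. As a consistency check, it is the $m=1$ analogue of the computation for $\operatorname{Var}(S_R)$ in the proof of Theorem~\ref{thm:mdp_variance}, with $Y_i$ frozen at $v_i$.

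\textbf{Step 2 (convergence in probability).} Conditional on $v$, $S_m^2=\frac{m}{m-1}\bigl(\frac1m\sum_j X_j^2-\bar X^2\bigr)$. Applying the weak law of large numbers to $\frac1m\sum_j X_j^2$ and to $\bar X$, then the continuous mapping theorem together with $m/(m-1)\to1$, gives $S_m^2\xrightarrow{p}\operatorname{Var}(X\mid v)$. The right-hand side of \eqref{eq:thm3-convergence} does not depend on $m$: by Step 1 it equals $\operatorname{Var}(X\mid v)$, so this is exactly the claim. The WLLN for $\frac1m\sum_j X_j^2$ requires $\mathbb{E}[X^2\mid v]<\infty$. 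This holds because $k$ is finite and each $R(s,a,s'_i)$ has finite variance $\tau_i^2$, which is implicitly assumed wherever $\tau_i^2$ is defined.

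\textbf{Main obstacle.} The main obstacle is justifying the conditioning rather than the algebra. We must ensure that fixing $\hat V(s'_i)=v_i$ leaves the distribution of $(I_j,R^{(j)})$ unchanged, so that the $X_j$ are genuinely i.i.d.\ given $v$. This relies on the two-stage sampling assumption stated before Theorem~\ref{thm:mdp_variance}, and I would invoke it explicitly, noting that it fails for intertwined online MCTS updates. If a statement unconditional in $v$ were wanted, I would add a remark: the conditional convergence holds for every $v$, so dominated convergence of $\mathbb{P}(|S_m^2-\operatorname{Var}(X\mid v)|>\epsilon \mid v)$ transfers it to the unconditional probability.
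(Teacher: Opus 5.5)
Your proof is correct and takes essentially the same approach as the paper. Both arguments condition on $v$ so that the augmented samples $X_j=R^{(j)}+v_{I_j}$ are i.i.d., and then use unbiasedness and consistency of the sample variance. The only difference is how $\operatorname{Var}(X\mid v)$ is obtained: you compute it directly by the law of total variance, while the paper specializes Theorem~\ref{thm:mdp_variance} with $\mathbb{E}[\hat V(s'_i)]$ replaced by $v_i$ and $\operatorname{Var}(\hat V(s'_i))=0$. Your explicit WLLN argument with the finite-second-moment check also fills in a step the paper simply asserts.
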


\begin{proof}
We apply the result of Theorem~\ref{thm:mdp_variance} by substituting $v_i$ for $\mathbb{E}[\hat{V}(s'_i)]$ and setting $\operatorname{Var}(\hat{V}(s'_i))=0$ as we condition on $v = \{v_i\}$. This leads to
\[
\zeta^2(s,a)\mid v
=
\frac{1}{N(s,a)}\left[
  \sum_{i=1}^k p_i \tau_i^2
  +
  \sum_{i=1}^k p_i \bigl(\alpha_i + v_i - \mathbb{E}[\hat Q(s,a)\mid v]\bigr)^2
\right].
\]

The estimate of the $Q$ function conditioning on $\hat{V}$ is $$\mathbb{E}[\hat{Q}(s,a)|v] =  \mathbb{E}[\hat{Q}(s,a)] + \sum_{i=1}^{k} p_i\bigl(v_i-\mathbb{E}[\hat{V}(s'_i)]\bigr).$$ 
Because $X_j=R^{(j)}+v_{I_j}$ are i.i.d. after conditioning on $\hat{V} = v$, we have
\begin{IEEEeqnarray}{rCl}
  N(s,a)\mathbb{E}[ \varIID(s,a) \mid v] &=& \operatorname{Var}[\hat{Q}(s,a) \mid v]  \nonumber \\
  &=&  N(s,a)\zeta^2(s,a)\mid v. \nonumber    
\end{IEEEeqnarray}

This yields the equality 
\begin{multline*}
N(s,a)\mathbb{E} \bigl[\varIID(s,a)
 \,\big|\,v\bigr]
= \sum_{i=1}^{k} p_i\tau_i^{2} \\
 + \sum_{i=1}^{k} p_i\bigl(\alpha_i+v_i-\mathbb{E}[\hat{Q}(s,a)|v]\bigr)^{2}.
\end{multline*}

Also due to the fact that $X_j$ are i.i.d. conditional on $v$, the sample variance estimator is consistent: $N(s,a) \varIID(s,a)\xrightarrow{p} N(s,a)\mathbb{E}\bigl[\varIID(s,a)\,\big|\,\hat{V}(s'_i) = v_i\bigr]$ as $N(s,a) \to \infty$.
\end{proof}

Comparing Theorem~\ref{thm:approx-bias} with the variance decomposition
in Theorem~\ref{thm:mdp_variance}, we see that, conditional on $v$,
$N(s,a)\,\varIID(s,a)$ asymptotically recovers only
the reward-variance term $\sum_{i=1}^k p_i \tau_i^2$ and the
state-transition variance term
$\sum_{i=1}^k p_i \bigl(\alpha_i + v_i - \mathbb{E}[\hat Q(s,a)\mid v]\bigr)^2$.
The two terms involving uncertainty in the successor value estimates,
$N(s,a)\sum_{i=1}^k p_i^2 \operatorname{Var}(\hat V(s'_i))$ and
$\sum_{i=1}^k p_i(1-p_i)\operatorname{Var}(\hat V(s'_i))$, are entirely
omitted. Thus, whenever there is non-negligible uncertainty in the
next-period value estimates, $\varIID(s,a)$
systematically underestimates $\zeta^2(s,a)$, which motivates the
recursive correction introduced in the next subsection.

\subsection{A consistent estimator \texorpdfstring{$\varConsistent(s,a)$}{zeta\_cst}}

Comparing Theorem~\ref{thm:mdp_variance} and Theorem~\ref{thm:approx-bias}, we observe that the approximate estimator captures two of the four terms in $\zeta^2(s,a)$, but leaves out the following two terms due to the variance in the value function estimates: \(   \sum_{i} p_i^{2}\operatorname{Var}(\hat{V}(s'_i)) + N(s,a)^{-1}\sum_{i} p_i(1-p_i)\operatorname{Var}(\hat{V}(s'_i))\), where the second term vanishes for large \(N(s,a)\), and the first term is equal to the variance contributed by the next-period value function estimates. We therefore define a new recursive variance estimator:
\begin{subequations}\label{eq:var-cst}
\begin{align}
\varConsistent(s,a)
  &= \varIID(s,a)
     + \sum_{i=1}^{k} \hat{p}_i^{2}\,\varConsistent(s'_i),\\
 \varConsistent(s,a)
  &= 0 \quad \text{for}\  s\in\mathcal{T},\\
\varConsistent(s'_i)
  &= \varConsistent\bigl(s'_i,\,a^{*}\bigr),\\
a^{*}
  &= \arg\max_{a'} \hat{Q}(s'_i,a'),\\
\hat{p}_i
  &= \frac{N(s,a,s'_i)}{N(s,a)},
\end{align}
\end{subequations}
where \(N(s,a,s'_i)\) counts how many samples from \((s,a)\) transitioned to state \(s'_i\). In particular, we do not assume \( N(s,a) = \sum_{i=1}^{k} N(s'_{i}) \). This accounts for state sharing, where successor states \(s'_i\) may accumulate samples from other parts of the search tree (or independent simulations) beyond just the transitions observed from \((s,a)\).
\begin{theorem}
\label{thm:cst-bias}
Suppose that, for each successor state \(s'_i\), the true mean, variance, and transition probability are used, i.e.,
\(
   \hat{V}(s'_i)=\mathbb{E}[\hat{V}(s'_i)],\;
   \varConsistent(s'_i)=\operatorname{Var}(\hat{V}(s'_i)),\;
   \hat{p}_i=p_i.
\)
Then
\[N(s,a) \mathbb{E} \bigl[\bigl(\varConsistent(s,a) -\zeta^2 (s,a)\bigr)\bigr] = -\sum_{i=1}^{k}p_i(1-p_i)\operatorname{Var}(\hat{V}(s'_i)).\] 

  
In particular, when the state is terminal, $s\in \mathcal{T}$, the right-hand-side is $0$.
\end{theorem}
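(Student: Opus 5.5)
Under the stated idealization, $\hat p_i = p_i$ and $\varConsistent(s'_i)=\operatorname{Var}(\hat V(s'_i))$ are deterministic. The correction term in the definition \eqref{eq:var-cst} is therefore the constant $\sum_{i=1}^k p_i^2\operatorname{Var}(\hat V(s'_i))$. By linearity,
\[
\mathbb{E}\bigl[\varConsistent(s,a)\bigr]=\mathbb{E}\bigl[\varIID(s,a)\bigr]+\sum_{i=1}^k p_i^2\operatorname{Var}(\hat V(s'_i)),
\]
so the whole task is to compute $\mathbb{E}[\varIID(s,a)]$ and subtract the expression for $\zeta^2(s,a)$ from Theorem~\ref{thm:mdp_variance}.

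For $\mathbb{E}[\varIID(s,a)]$ I would use the hypothesis $\hat V(s'_i)=\mathbb{E}[\hat V(s'_i)]$. This means the successor values entering the samples are fixed at their means, $v_i=\mathbb{E}[\hat V(s'_i)]$. Conditional on these values, Theorem~\ref{thm:approx-bias} gives
\[
N(s,a)\,\mathbb{E}\bigl[\varIID(s,a)\bigr]=\sum_{i=1}^k p_i\tau_i^2+\sum_{i=1}^k p_i\bigl(\alpha_i+v_i-\mathbb{E}[\hat Q(s,a)\mid v]\bigr)^2 .
\]
With $v_i=\mathbb{E}[\hat V(s'_i)]$, the conditional mean $\mathbb{E}[\hat Q(s,a)\mid v]=\sum_i p_i(\alpha_i+v_i)$ equals $\mathbb{E}[\hat Q(s,a)]$, as noted in the proof of Theorem~\ref{thm:approx-bias}. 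Hence the second sum is exactly the state-transition term appearing in \eqref{eq:Var-truth}.

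Multiplying the difference by $N(s,a)$ and using \eqref{eq:Var-truth}, the terms cancel as follows. The reward term $\sum_i p_i\tau_i^2$ cancels. The state-transition term cancels. The leading term $N(s,a)\sum_i p_i^2\operatorname{Var}(\hat V(s'_i))$ of $N(s,a)\zeta^2(s,a)$ cancels against $N(s,a)$ times the recursive correction. The only term left is $-\sum_i p_i(1-p_i)\operatorname{Var}(\hat V(s'_i))$, coming from $\zeta^2$, which is the claim.

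For the terminal remark I would read it as the base case of the recursion, where the successors have zero-variance estimates. Since $\operatorname{Var}(\hat V)=0$ on $\mathcal{T}$ by $\hat V\equiv 0$ and $\varConsistent\equiv 0$, the right-hand side vanishes. The same holds trivially when no successor term is present.

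The main obstacle is interpretive rather than computational. The idealization plugs in the mean of $\hat V$ for the sample values while keeping the variance of $\hat V$ inside $\zeta^2$. I would state explicitly that $\mathbb{E}[\varIID]$ is computed with the successor values held at their means, which is a plug-in of Theorem~\ref{thm:approx-bias}, whereas $\zeta^2(s,a)$ keeps its unconditional form from Theorem~\ref{thm:mdp_variance}. Once that convention is fixed, the rest is the term-by-term cancellation above.
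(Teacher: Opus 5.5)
Your proposal is correct and follows essentially the same route as the paper. You plug the idealized successor quantities into Theorem~\ref{thm:approx-bias} to get $N(s,a)\mathbb{E}[\varIID(s,a)]$, add the now-deterministic correction $N(s,a)\sum_i p_i^2\operatorname{Var}(\hat V(s'_i))$, and subtract $N(s,a)\zeta^2(s,a)$ from Theorem~\ref{thm:mdp_variance}, so that three of the four terms cancel; your explicit check that $\mathbb{E}[\hat Q(s,a)\mid v]=\mathbb{E}[\hat Q(s,a)]$ at $v_i=\mathbb{E}[\hat V(s'_i)]$, and your reading of the terminal remark as the case where the successor estimates have zero variance, spell out what the paper's brief proof leaves implicit.
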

\begin{proof}
Use Theorem ~\ref{thm:approx-bias} to calculate the expected value of $\varIID$ with the perfect estimates, and substitute into part (a) and (b) of ~\eqref{eq:var-cst}. 
Subtract the variance provided in Theorem~\ref{thm:mdp_variance} and multiply the difference by $N(s,a)$. Three of the four terms from the theoretical variance will cancel each other out, leaving the remaining one term.
\end{proof}


As a result of Theorem~\ref{thm:cst-bias}, the bias of
$\varConsistent(s,a)$ is of order $1/N(s,a)$, and its
magnitude shrinks further as $\operatorname{Var}(\hat V(s'_i))$
decreases. In particular, if
$\operatorname{Var}(\hat V(s'_i)) \to 0$, then
\[
\mathbb{E}\bigl[\varConsistent(s,a)
  - \zeta^2(s,a)\bigr] \to 0.
\]

\begin{theorem}
Suppose that as $N(s,a)\rightarrow\infty$, the following assumptions hold:
\begin{enumerate}[label=\roman*)]
    \item  the value estimator $\hat{V}(s_{i}')$ converges in probability to its mean $\mathbb{E}[\hat{V}(s_{i}')]$,
    \item  the successor states variance vanishes, i.e., $Var(\hat{V}(s_{i}')) \rightarrow 0$, and
    \item  the successor states variance estimates converge at a rate satisfying $N(s,a)\sum_{i=1}^{k}|\hat{p}_{i}^{2}\varConsistent(s_{i}')-p_{i}^{2}Var(\hat{V}(s_{i}'))| \xrightarrow{p} 0$.
\end{enumerate}
Then the variance estimator $\varConsistent(s,a)$ is consistent, satisfying:
\begin{equation}
N(s,a)(\varConsistent(s,a)-\zeta^{2}(s,a)) \xrightarrow{p} 0.
\end{equation}
\end{theorem}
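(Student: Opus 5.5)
The plan is to write $N:=N(s,a)$ and split the scaled error $N(\varConsistent(s,a)-\zeta^2(s,a))$ into pieces, each shown to vanish in probability. Substituting the definition \eqref{eq:var-cst} and the decomposition of Theorem~\ref{thm:mdp_variance} gives
\begin{multline*}
N\bigl(\varConsistent(s,a)-\zeta^2(s,a)\bigr)
= \Bigl[N\varIID(s,a) - \sum_{i=1}^k p_i\tau_i^2 - \sum_{i=1}^k p_i\bigl(\alpha_i+\mathbb{E}[\hat V(s'_i)]-\mathbb{E}[\hat Q(s,a)]\bigr)^2\Bigr]\\
+ N\sum_{i=1}^k\bigl(\hat p_i^2\varConsistent(s'_i)-p_i^2\operatorname{Var}(\hat V(s'_i))\bigr)
- \sum_{i=1}^k p_i(1-p_i)\operatorname{Var}(\hat V(s'_i)).
\end{multline*}
The third term is deterministic and tends to $0$ by assumption ii), since the $p_i(1-p_i)$ are bounded. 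The second term is bounded in absolute value by the quantity in assumption iii), so it converges to $0$ in probability.

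The remaining work is the first bracket, which carries the main content. By Theorem~\ref{thm:approx-bias}, conditional on $\hat V(s'_i)=v_i$, $N\varIID(s,a)$ is asymptotically the sample variance of the conditionally i.i.d.\ variables $X_j=R^{(j)}+v_{I_j}$, and it converges in probability to
\[
g(v):=\sum_{i=1}^k p_i\tau_i^2+\sum_{i=1}^k p_i\Bigl(\alpha_i+v_i-\sum_{l=1}^k p_l(\alpha_l+v_l)\Bigr)^2 .
\]
Write $\mu_i:=\mathbb{E}[\hat V(s'_i)]$. Because $\mathbb{E}[\hat Q(s,a)]=\sum_l p_l(\alpha_l+\mu_l)$, the subtracted quantity in the bracket is exactly $g(\mu)$. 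Since $g$ is a polynomial and hence continuous, assumption i) together with the continuous mapping theorem gives $g(\hat V)-g(\mu)\xrightarrow{p}0$. The bracket is then split as $\bigl(N\varIID(s,a)-g(\hat V)\bigr)+\bigl(g(\hat V)-g(\mu)\bigr)$, and combining the three pieces via Slutsky's lemma yields the claim.

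The main obstacle is the first piece, $N\varIID(s,a)-g(\hat V)\xrightarrow{p}0$. The difficulty is that $\hat V$ may itself vary with $N$, whereas Theorem~\ref{thm:approx-bias} is stated for a fixed realization $v$. The route is to use the standing independence between $\hat V$ and the current-period samples to write $N\varIID(s,a)=\frac{N}{N-1}\bigl(\frac1N\sum_j X_j^2-\bar X^2\bigr)$. In this expression the $X_j$ depend on $v$ only through the finitely many counts $N_i/N$ and the within-state reward averages. The empirical quantities $N_i/N\to p_i$, $\frac{1}{N_i}\sum_{j:I_j=i}R^{(j)}\to\alpha_i$, and $\frac{1}{N_i}\sum_{j:I_j=i}(R^{(j)})^2\to\tau_i^2+\alpha_i^2$ converge by the weak law of large numbers and do not involve $v$. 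The statistic $N\varIID$ is then a fixed continuous function of these empirical averages and of $v$, evaluated at $v=\hat V$. It converges to $g(\hat V)$ uniformly for $v$ in compact sets, and $\hat V$ is tight by assumption i). This uniformity argument is the step that needs the most care; if it becomes delicate, the fallback is to condition on $\hat V$ and apply dominated convergence to the conditional probabilities.
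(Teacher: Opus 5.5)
Your proposal is correct and follows the paper's proof: it uses the same split of $N\bigl(\varConsistent(s,a)-\zeta^2(s,a)\bigr)$ into the local-variance bracket, the successor term controlled by assumption iii), and the deterministic $p_i(1-p_i)$ term killed by assumption ii), and it handles the bracket by conditional consistency of the i.i.d.\ sample variance plus assumption i) and Slutsky. Your treatment of that bracket is more careful than the paper's, which simply invokes Slutsky after conditioning on a fixed $\hat V$. You write $N\varIID(s,a)$ as a quadratic in $v$ whose coefficients are $v$-free empirical averages, then use tightness of $\hat V$; this also makes clear that independence of $\hat V$ from the current-period samples is not needed for this step.
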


\begin{proof}
The convergence for the terminal state given in (\ref{eq:var-cst}b) is a trivial result of the convergence of \(N\varIID\), so we focus on the two terms in (\ref{eq:var-cst}a). Let \(m=N(s,a)\). Using (\ref{eq:var-cst}a) together with Theorem~\ref{thm:mdp_variance}, write \(m\bigl(\varConsistent(s,a)-\zeta^2(s,a)\bigr)=A+B-C\), where
\(A :=
m \varIID(s, a)
  - \sum_{i=1}^k p_i \tau_i^2
  - \sum_{i=1}^k p_i \bigl(
      \alpha_i + \mathbb{E}[\hat V(s'_i)]
      - \mathbb{E}[\hat Q(s, a)]
    \bigr)^2\);
\(B :=
m \sum_{i=1}^k \bigl(
  \hat p_i^{\,2}\,\varConsistent(s'_i)
  - p_i^{\,2}\,\operatorname{Var}(\hat V(s'_i))
\bigr)\);
\(
C :=
\sum_{i=1}^k p_i (1-p_i)\,\operatorname{Var}(\hat V(s'_i)).\)

For \(A\): conditioning on the fixed vector \(\{\hat V(s'_i)\}\) makes the augmented samples \(X_j=R^{(j)}+\hat V(s'_{I_j})\) independent across \(j\), so \(m\,\varIID(s,a)\) converges (in probability) to \(\sum_{i=1}^{k} p_i\tau_i^{2} + \sum_{i=1}^{k} p_i\bigl(\alpha_i+\mathbb{E}[\hat V(s'_i)]-\mathbb{E}[\hat Q(s,a)]\bigr)^2\). With \(\hat V(s'_i)\xrightarrow{p}\mathbb{E}[\hat V(s'_i)]\) by assumption~(i), Slutsky's theorem yields \(A\xrightarrow{p}0\).

For \(B\): by assumption~(iii), we have \(m\sum_{i=1}^{k}\bigl|\hat p_i^{2}\,\varConsistent(s'_i)-p_i^{2}\,\operatorname{Var}(\hat V(s'_i))\bigr|\xrightarrow{p}0\), hence \(B\xrightarrow{p}0\).

For \(C\): since \(\operatorname{Var}(\hat V(s'_i)) \to 0\) by assumption~(ii), the deterministic term \(C\) converges to 0.

Combining these results, we conclude \(m\bigl(\varConsistent(s,a)-\zeta^2(s,a)\bigr)\xrightarrow{p} 0\).
\end{proof}

The above theorem establishes the consistency of the variance estimator $\varConsistent(s,a)$. However, one should interpret assumption~(iii) carefully. It implies that for the recursive estimator to be consistent at the current node, the variance estimates at the successor nodes must converge sufficiently fast. Specifically, the cumulative error from estimating successor variances and transition probabilities (scaled by the current sample size $N(s,a)$) must vanish. If the tree is too deep or the successor estimates are too noisy, this error could accumulate and bias the root variance estimate.

\section{Efficient Estimation via Node-Local Statistics}
\label{sec:efficient}

The variance estimator derived in Section~\ref{sec:consistent} is consistent, but using it inside iterative algorithms like MCTS raises a practical issue: simulations are generated \emph{sequentially} and a na\"ive implementation would recompute every state's mean and variance after each new sample, which is computationally infeasible for large and deep trees. In MCTS terminology, we lack an efficient ``backpropagation'' rule to update the consistent variance estimates when a new simulation sample is collected.

Before detailing the specific algorithms, we clarify the objective of the backpropagation scheme. While standard tree search requires only the value estimate $\hat{Q}(s,a)$, the variance-based algorithms considered here fundamentally require a second statistic: the local variance component, denoted as $s^{2}(s,a):=N(s,a)\varIID(s,a)$. Because the consistent variance estimator $\varConsistent$ (defined in \eqref{eq:var-cst}) is a function of $\varIID$, maintaining $s^{2}$ (and therefore $\varIID$) is necessary. The goal of the following updating schemes, therefore, is to efficiently update this pair---$\hat{Q}(s,a)$ and $s^{2}(s,a)$---whenever a new sample is collected.

In the following subsections, we examine three updating schemes. We begin with the baseline \emph{Exact Updating} which stores full sample histories to ensure correctness but suffers from high computational cost. We contrast this with \emph{Summary Statistics}, an efficient approach that yields incorrect estimates by relying on outdated successor values (lagging updates). Finally, we demonstrate that our proposed \emph{Node-Local Statistics} achieves the best of both worlds: it is computationally efficient and mathematically equivalent to the Exact Updating method.

\subsection{Three Updating Schemes} \label{subsec:updating-schemes}

For a state-action $(s,a)$ with sample size $N(s,a)$, by substituting $\varIID(s,a) = s^2(s,a)/N(s,a)$ into Equation \eqref{eq:Var-approx}, we see that $s^2$ is the sample variance of terms $x_j = r^{(j)} + \hat{V}(s'_{I_j}), j=1,...,N(s,a)$. We call $x_j$ augmented samples.

\subsubsection{Method 1: Exact Updating} \label{subsec:exact-update}
This method ensures correctness by storing the full history of rewards $\{r^{(j)}\}$ for each action node $(s,a)$. When evaluating the node's value and variance, the algorithm iterates through this stored history and adds the \emph{current} value estimate of the successor state to each reward to form the current vector of augmented samples $\{x_j = r^{(j)} + \hat{V}(s'_{I_j})\}$. The mean and sample variance of this vector are then computed to obtain $\hat{Q}$ and $s^2$.

While exact, this method requires a bottom-up recomputation whenever a sample is added. Because the value estimate of a successor state changes over time, the augmented samples must be reconstructed from scratch to ensure correctness. Consequently, the computational cost at each node grows linearly with the number of simulations ($O(N)$), which becomes prohibitively expensive for deep trees with large sample counts.

\subsubsection{Method 2: Iteration with Summary Statistics} \label{subsec:approx-update}
To avoid storing full histories, this scheme updates statistics iteratively using running estimates. When a new sample is added, the algorithm sums the immediate reward $r$ with the \emph{current} successor value estimate to create the augmented sample \(x_{N+1} = r +\hat{V}(s')\). It then updates the running mean $\tilde{Q}$ and variance proxy $\tilde{s}^2$ as follows:
\begin{align*}
\tilde{Q}_{N+1} &= \frac{N\tilde{Q}_N+x_{N+1}}{N+1},\\
\tilde{s}^2_{N+1} &= \frac{N-1}{N}\tilde{s}^2_{N}+\frac{(x_{N+1}-\tilde{Q}_N)^{2}}{N+1}.
\end{align*}
The fundamental flaw of this method is \emph{update lag}. The summary statistics permanently absorb the value $\hat{V}(s')$ as it existed at the moment of the visit. Future updates to $\hat{V}(s')$ are not retroactively applied to the summary statistics $\tilde{Q}$ or $\tilde{s}^2$. As a result, these estimators diverge from the true sample mean $\hat{Q}$ and variance $s^2$. This discrepancy introduces inaccuracies, as we demonstrate in the example in Section \ref{subsec:example-comparison}.
\subsubsection{Method 3: Proposed Node-Local Statistics} \label{subsec:suff-update}
To address the computational cost of the exact updating method and the inaccuracy of the summary statistics method, we propose a new node-local statistics updating method. The main idea is to store reward statistics separately from the value estimates and combine them only when computing $\hat Q(s,a)$ and $s^2(s,a)$. For an action node $(s,a)$ with $\hat{Q}, s^2$, and $N$ statistics, each possible successor state $s'_i, 1\leq i\leq k$ has three statistics (reward sample mean $\hat{\alpha}_i$, reward sample variance $\hat{\tau}_i^{2}$, and transition frequency $N_i$) stored and requires iterative updates when a new sample \((s,a,r,s'_i)\) arrives:
\begin{subequations}\label{eq:tau-alpha}
\begin{align}
\hat{\alpha}_i &\leftarrow
\frac{N_i\,\hat{\alpha}_i + r}{N_i+1},\\
\hat{\tau}_i^{2} &\leftarrow
\frac{N_i-1}{N_i}\hat{\tau}_i^{2}
      +\frac{(r-\hat{\alpha}_i)^{2}}{N_i+1},\\
N_i &\leftarrow N_i+1,\quad
N\leftarrow N+1.
\end{align}
\end{subequations}
These estimators are kept local and update only with samples transitioning from $(s,a)$ to the specific next state $s'_i$, instead of using summary statistics that depend on the time-varying $\hat{V}$ of successor states. Reward updates affect only \(\hat{\alpha}_i,\hat{\tau}_i^{2}\), while value propagation is performed separately, as detailed in the following theorem.

\begin{theorem}Consider a state-action pair \((s,a)\) with total sample size \(N\).
For each successor state \(s'_i\), let \(N_i, \hat{\alpha}_i, \text{and } \hat{\tau}_i^2\) be the statistics defined in \eqref{eq:tau-alpha}, and let \(\hat{V}(s'_i)\) denote the current state value estimate. Then, 
\begin{align}
\hat{Q}(s,a) &= \sum_{i=1}^{k}\frac{N_i}{N}\bigl(\hat{\alpha}_i+\hat{V}(s'_i)\bigr),
\label{eq:val-update-Q}\\
\varIID(s,a)
 &= \frac{1}{N(N-1)}
  \Bigl[
\sum_{i=1}^{k}(N_i-1)\hat{\tau}_i^{2}\notag\\ 
&+\sum_{i=1}^{k}N_i
\bigl(\hat{\alpha}_i+\hat{V}(s'_i)-\hat{Q}(s,a)\bigr)^{2}
  \Bigr], \label{eq:var-approx-s}\\
s^2(s,a) &= N\,\varIID(s,a). \label{eq:s2-final}
\end{align} 
\end{theorem}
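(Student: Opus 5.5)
The plan is to prove the three identities by a within-group/between-group decomposition of the sum of squares, with groups indexed by successor state. First, I will check that the recursions in \eqref{eq:tau-alpha} are the standard Welford updates. After processing all samples, $\hat{\alpha}_i$ is then the sample mean of the rewards $\{r^{(j)} : I_j=i\}$, and $\hat{\tau}_i^2$ is their unbiased sample variance, i.e.\ $(N_i-1)\hat{\tau}_i^2=\sum_{j:I_j=i}(r^{(j)}-\hat{\alpha}_i)^2$. I will verify this by induction on $N_i$. For $N_i\ge 2$ this is the usual one-line algebra. The boundary cases $N_i\in\{0,1\}$ need a convention: the coefficient $(N_i-1)/N_i$ is undefined at $N_i=0$. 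I will simply set $\hat\tau_i^2=0$ there, so that $(N_i-1)\hat\tau_i^2=0$ for $N_i\le 1$ and the identity still holds. I expect this bookkeeping to be the only delicate point.

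For \eqref{eq:val-update-Q}, group the augmented samples $x_j=r^{(j)}+\hat V(s'_{I_j})$ by $I_j$. The key observation is that every sample in group $i$ uses the same \emph{current} value $\hat V(s'_i)$, which is exactly what Method~1 does. This gives
\[
\sum_{j=1}^N x_j=\sum_{i=1}^k\Bigl(\sum_{j:I_j=i} r^{(j)} + N_i\hat V(s'_i)\Bigr)=\sum_{i=1}^k N_i\bigl(\hat\alpha_i+\hat V(s'_i)\bigr),
\]
and dividing by $N=\sum_i N_i$ yields \eqref{eq:val-update-Q}.

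For \eqref{eq:var-approx-s}, I start from \eqref{eq:Var-approx}, whose numerator is $\sum_j (x_j-\hat Q)^2$. Within group $i$, write
\[
x_j-\hat Q=\bigl(r^{(j)}-\hat\alpha_i\bigr)+\bigl(\hat\alpha_i+\hat V(s'_i)-\hat Q\bigr).
\]
Squaring and summing over $j$ with $I_j=i$, the cross term vanishes because $\sum_{j:I_j=i}(r^{(j)}-\hat\alpha_i)=0$. The first square sums to $(N_i-1)\hat\tau_i^2$ by the first step. The second square is constant within the group and contributes $N_i(\hat\alpha_i+\hat V(s'_i)-\hat Q)^2$. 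Summing over $i$ and dividing by $N(N-1)$ gives \eqref{eq:var-approx-s}.

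Finally, \eqref{eq:s2-final} is the definition $s^2(s,a)=N(s,a)\varIID(s,a)$ from Section~\ref{sec:efficient}. Since the exact method's augmented samples coincide with those reconstructed here, the node-local formulas reproduce Method~1 exactly. I expect no real obstacle beyond the small-$N_i$ conventions.
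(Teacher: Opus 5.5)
Your proposal is correct and follows essentially the same route as the paper. Both arguments group the augmented samples by successor and add and subtract the local mean $\hat\alpha_i+\hat V(s'_i)$; the cross term then vanishes, the within-group sum equals $(N_i-1)\hat\tau_i^2$, and the between-group term is $N_i(\hat\alpha_i+\hat V(s'_i)-\hat Q(s,a))^2$. The one addition is your explicit induction on \eqref{eq:tau-alpha} showing that $\hat\alpha_i$ and $\hat\tau_i^2$ are the group sample mean and unbiased sample variance, which the paper simply asserts; this relies on the $\hat\tau_i^2$ update using the pre-update $\hat\alpha_i$ and on the convention $\hat\tau_i^2=0$ for $N_i\le 1$.
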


\begin{proof}
The derivation of the mean estimator $\hat{Q}(s,a)$ in \eqref{eq:val-update-Q} is trivial from the definition of the sample mean partitioned by successor states.
For the variance estimator, recall that the baseline $\varIID(s,a)$ is defined as the sample variance of the $N$ augmented samples $x_j = r^{(j)} + \hat{V}(s'_{I_j})$:
\[
    \varIID(s,a) = \frac{1}{N(N-1)} \sum_{j=1}^N \bigl(x_j - \hat{Q}(s,a)\bigr)^2. 
\]
We decompose the summation by grouping samples according to their successor state index $i \in \{1, \dots, k\}$.
Let $G_i = \{j: I_j = i\}$ be the set of indices $j$ such that the transition was to $s'_i$.
For a fixed successor $i$, let $\bar{x}_i = \hat{\alpha}_i + \hat{V}(s'_i)$ be the local sample mean.
We add and subtract $\bar{x}_i$ inside the square and expand the term for group $i$:
\begin{align*}
    \sum_{j \in G_i} &(x_j - \hat{Q}(s,a))^2 \\
    &= \sum_{j \in G_i} \bigl[ (x_j - \bar{x}_i) + (\bar{x}_i - \hat{Q}(s,a)) \bigr]^2 \\
    &= \underbrace{\sum_{j \in G_i} (x_j - \bar{x}_i)^2}_{A} + \underbrace{\sum_{j \in G_i} (\bar{x}_i - \hat{Q}(s,a))^2}_{B} \\
    &\quad + \underbrace{2(\bar{x}_i - \hat{Q}(s,a)) \sum_{j \in G_i} (x_j - \bar{x}_i)}_{C}.
\end{align*}
We analyze the three terms individually:
\begin{itemize}
    \item $A$:  This is the sum of squared deviations from the local mean, which relates to the stored local variance $\hat{\tau}_i^2$:
    \[ A = \sum_{j \in G_i} (r^{(j)} - \hat{\alpha}_i)^2 = (N_i - 1)\hat{\tau}_i^2.
\]
    \item $B$: Since the term inside the summation is constant with respect to $j$, we simply multiply by the count $N_i$:
    \[ B = N_i \bigl(\hat{\alpha}_i + \hat{V}(s'_i) - \hat{Q}(s,a)\bigr)^2.
\]
    \item $C$: $\sum_{j \in G_i} x_j = N_i \bar{x}_i$ and $\sum_{j \in G_i} \bar{x}_i = N_i \bar{x}_i$, so $C = 0$.
\end{itemize}
Summing $A + B$ over all groups $i=1,\dots,k$ and dividing by $N(N-1)$ yields \eqref{eq:var-approx-s}. Equation \eqref{eq:s2-final} follows immediately from the definition of $s^2$.
\end{proof}

In other words, \eqref{eq:val-update-Q}--\eqref{eq:s2-final} give the same $\hat Q(s,a)$ and $s^2(s,a)$ as if we had stored and processed all augmented samples $X_j = R^{(j)} + \hat V(s'_{I_j})$, while requiring only node-local statistics.
The highest tree-updating computational cost for updating (\ref{eq:val-update-Q}) and (\ref{eq:var-approx-s}) is \(O(k)\), where $k$ is the number of successor states (dependent on $s$ and $a$) and is independent of sample path count.

\subsection{Comparative Analysis via Illustrative Example}
\label{subsec:example-comparison}

To compare these methods, we use a numerical example of an MDP with a horizon of two periods.
In this example, the state transition is deterministic while the rewards are random.
We specifically analyze the update performed when a ninth simulated sample path is added to a tree built from eight prior paths.
The simulation history is listed below. Each path is represented as a tuple $(a, r, s', a', r')$:
\begin{enumerate}
    \item $(a_1, -1, s_1, a_{11}, -1)$
    \item $(a_1, 2, s_1, a_{11}, 3)$
    \item $(a_1, 2, s_1, a_{12}, 4)$
    \item $(a_1, 1, s_1, a_{12}, 0)$
    \item $(a_2, 1, s_2, a_{21}, 0)$
    \item $(a_2, 1, s_2, a_{21}, 2)$
    \item $(a_2, 1, s_2, a_{22}, -2)$
    \item $(a_2, 1, s_2, a_{22}, 0)$
  
  \item \textbf{New Path: $(a_1, 3, s_1, a_{11}, 5.5)$}
\end{enumerate}

\begin{figure}
  \centering
  \includegraphics[width=\linewidth]{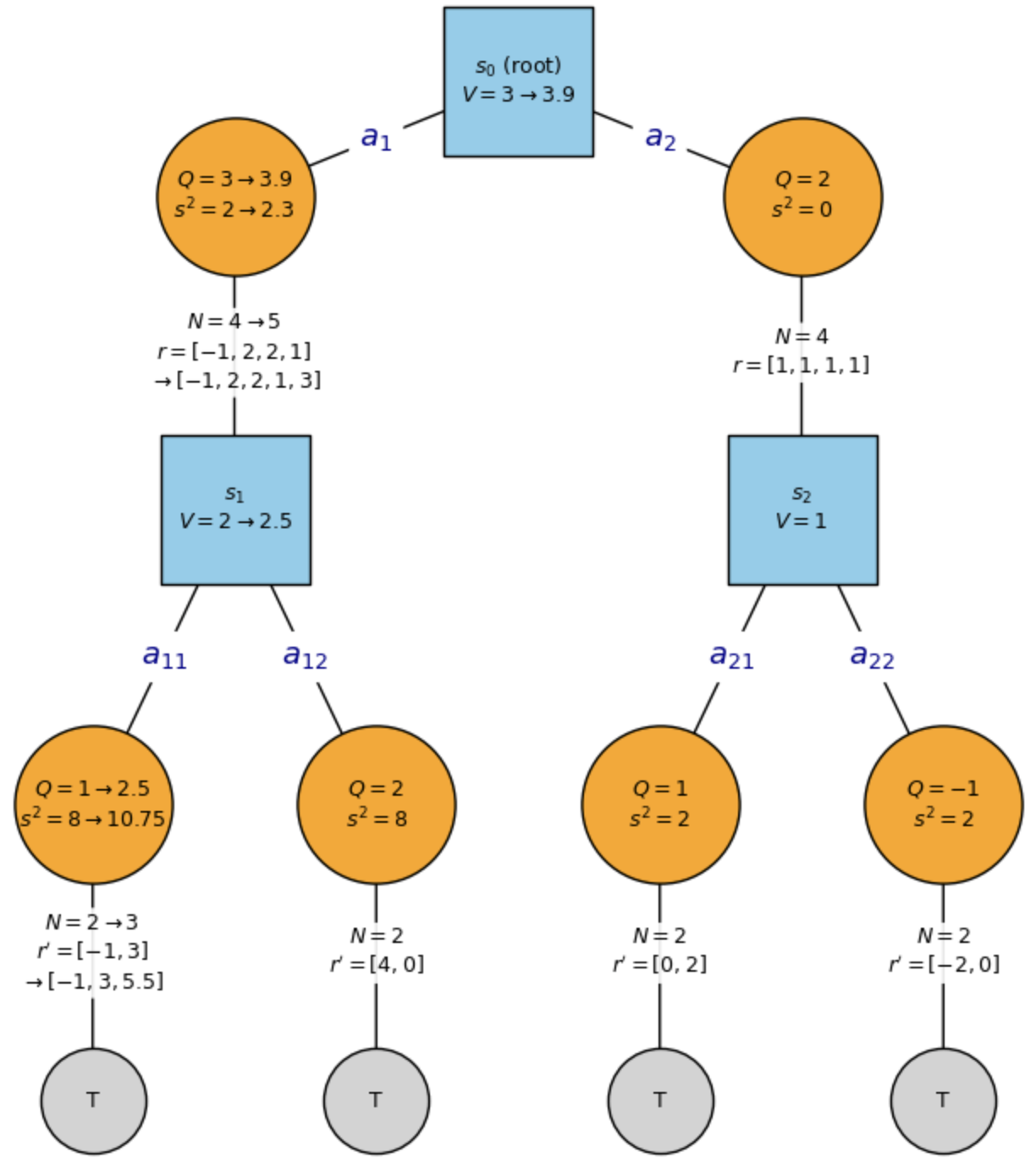}
  \caption{Exact Updating ($N_{\text{root}}=9$): Tree state updated (\textit{old} $\to$ \textit{new}) with the ninth sample path.}
  \label{fig:Exact_Tree}
\end{figure}

Fig.~\ref{fig:Exact_Tree} shows the tree state after updating with the ninth sample path under the \textbf{Exact Updating} scheme.
This serves as the ground truth for our comparison. The correct updated value for $\hat{Q}(s_0,a_1)$ is 3.9.
\begin{figure}
  \centering
  \includegraphics[width=\linewidth]{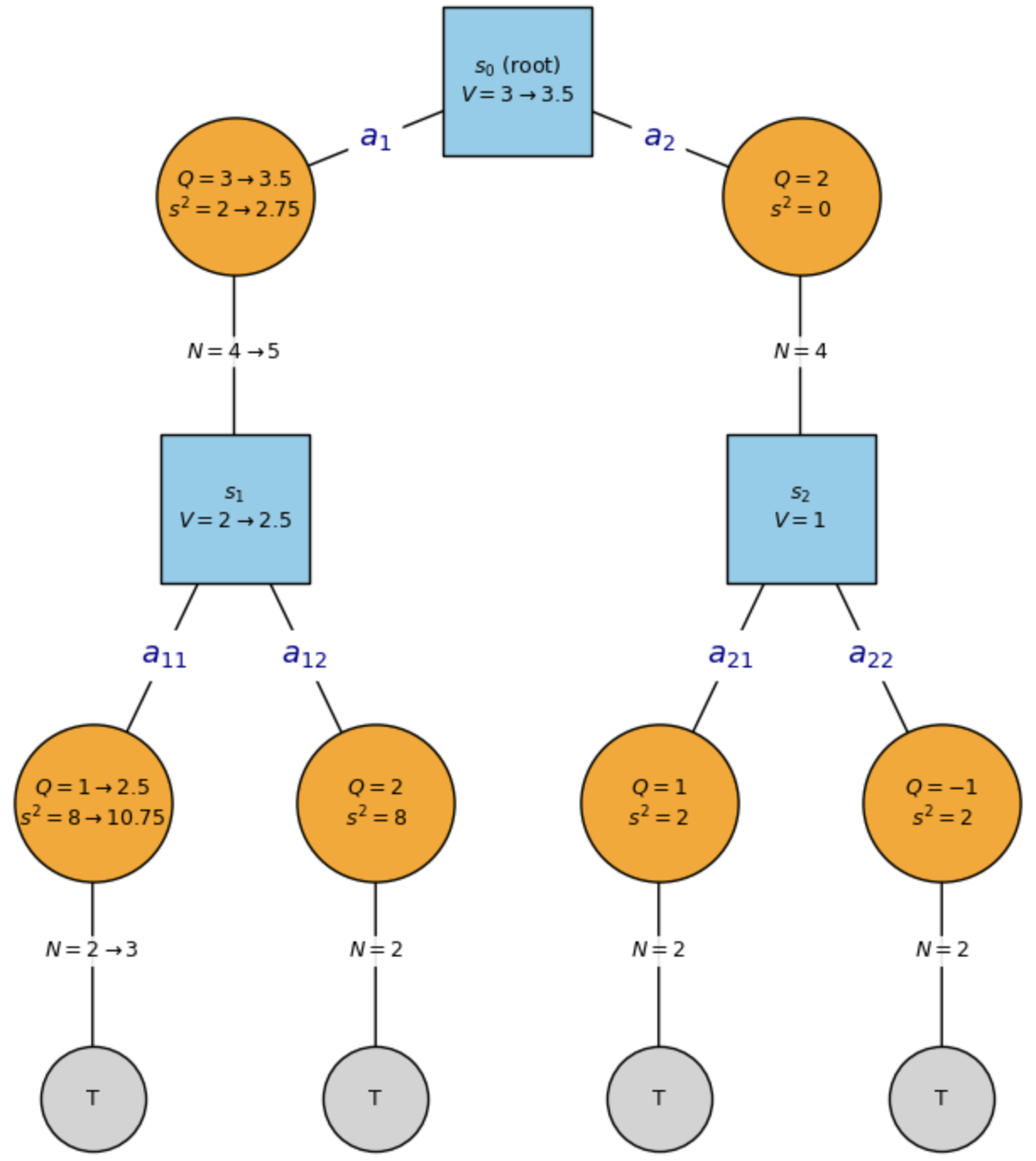}
  \caption{Summary Statistics ($N_{\text{root}}=9$): Tree state updated (\textit{old} $\to$ \textit{new}) with the ninth sample path converted to augmented rewards \((s_0, a_1, x=3+2.5, s_1, a_{11}, x=5.5)\).}
  \label{fig:Approximate_Tree}
\end{figure}

Fig.~\ref{fig:Approximate_Tree} depicts the update using \textbf{Summary Statistics}.
We intentionally initialized the tree with the correct values from the first eight paths to demonstrate that even given a perfect initial state, this rule introduces bias in a single step.
For action $a_1$ with immediate reward $r = 3$, the augmented sample is $x = r + V = 3 + 2.5 = 5.5$.
The value of $a_1$ is updated by $Q \leftarrow (4 \times 3 + 5.5)/(4 + 1) = 3.5$.
Because the previously stored summary statistics at node $a_1$ were computed using the obsolete estimate of $\hat{V}(s_1)$, this updating scheme produces a biased $\hat{Q}(s_0,a_1) = 3.5$ instead of the correct value of 3.9 as in the Exact Updating scheme shown in Fig.~\ref{fig:Exact_Tree}.
\begin{figure}
  \includegraphics[width=\linewidth]{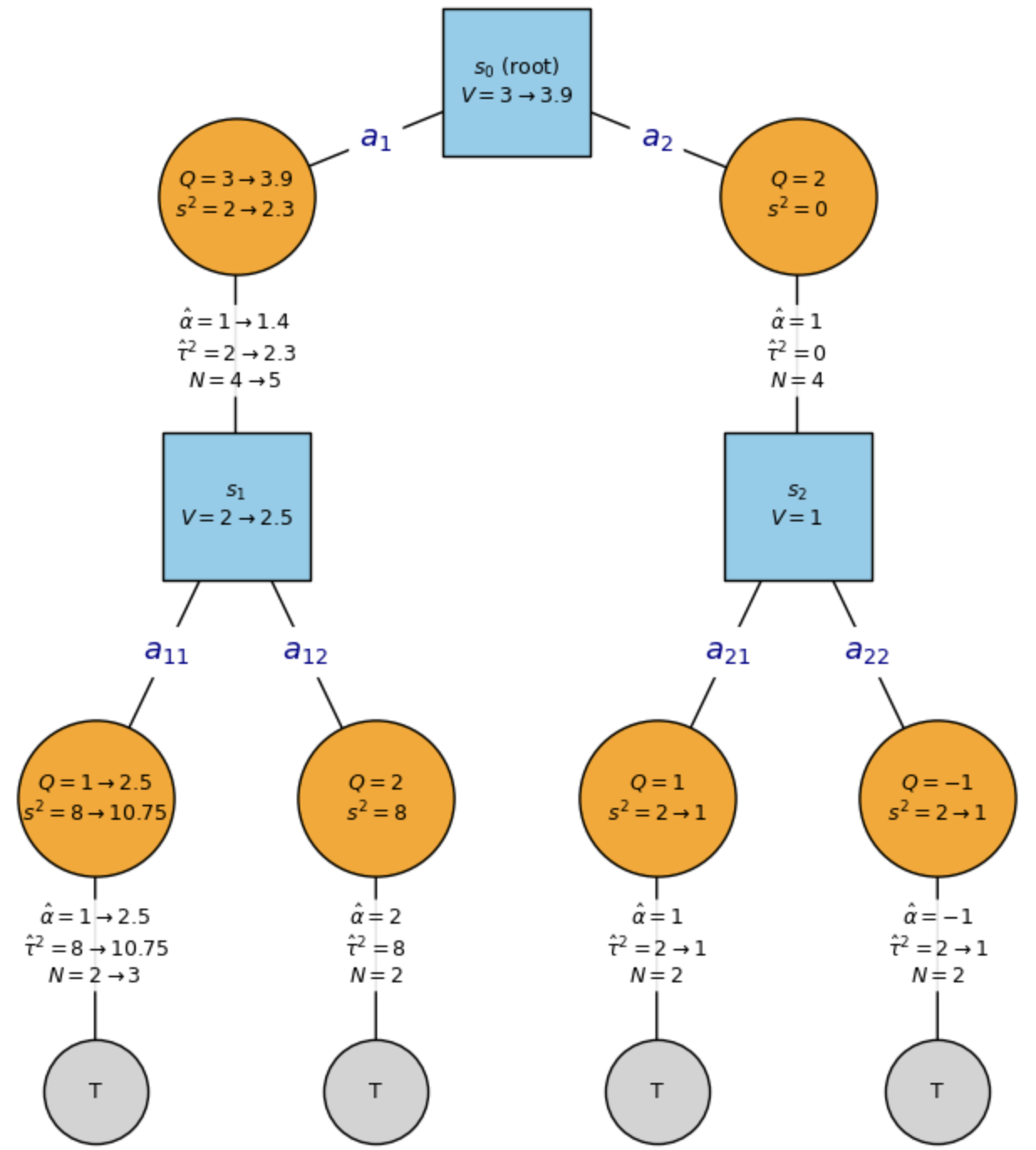}
\caption{Node-Local Statistics ($N_{\text{root}}=9$): Tree state updated (\textit{old} $\to$ \textit{new}) with the ninth sample path \((s_0, a_1, r=3, s_1, a_{11}, r=5.5)\).}
  \label{fig:Sufficient_Tree}
\end{figure}

Fig.~\ref{fig:Sufficient_Tree} shows the update using the proposed \textbf{Node-Local Statistics}.
The average reward of $a_{11}$ is updated by $\hat{\alpha} \leftarrow (N\hat{\alpha}+r)/(N+1) = (2+5.5)/3 = 2.5$, giving $Q = \bar{r} = 2.5$, which then updates $\hat{V}(s_1) = 2.5$.
The average reward of $a_{1}$ is updated by $\hat{\alpha} \leftarrow (N\hat{\alpha}+r)/(N+1) = (4\times 1+3)/5 = 1.4$.
The key difference between node-local statistics and Summary Statistics updating is that here $\hat{Q}(a_1)$ is updated using the sum of the reward average and the updated $\hat{V}(s_1)$, $\hat{Q}(s_0,a_1) = \bar{r} + \hat{V}(s_1) = 1.4 + 2.5 = 3.9$.
The value of $s_0$ is thus updated to 3.9. This result agrees with the exact method.

\subsection{Node-Local Statistics with Random Transitions}
\label{subsec:random-transition-example}

We now consider the case where the MDP may transition into multiple successor states.
We apply the updating rules defined in Equations \eqref{eq:val-update-Q}--\eqref{eq:s2-final} to the example shown in Fig \ref{fig:Method_Comparison}.
\begin{figure}
\includegraphics[width=0.8 \linewidth]{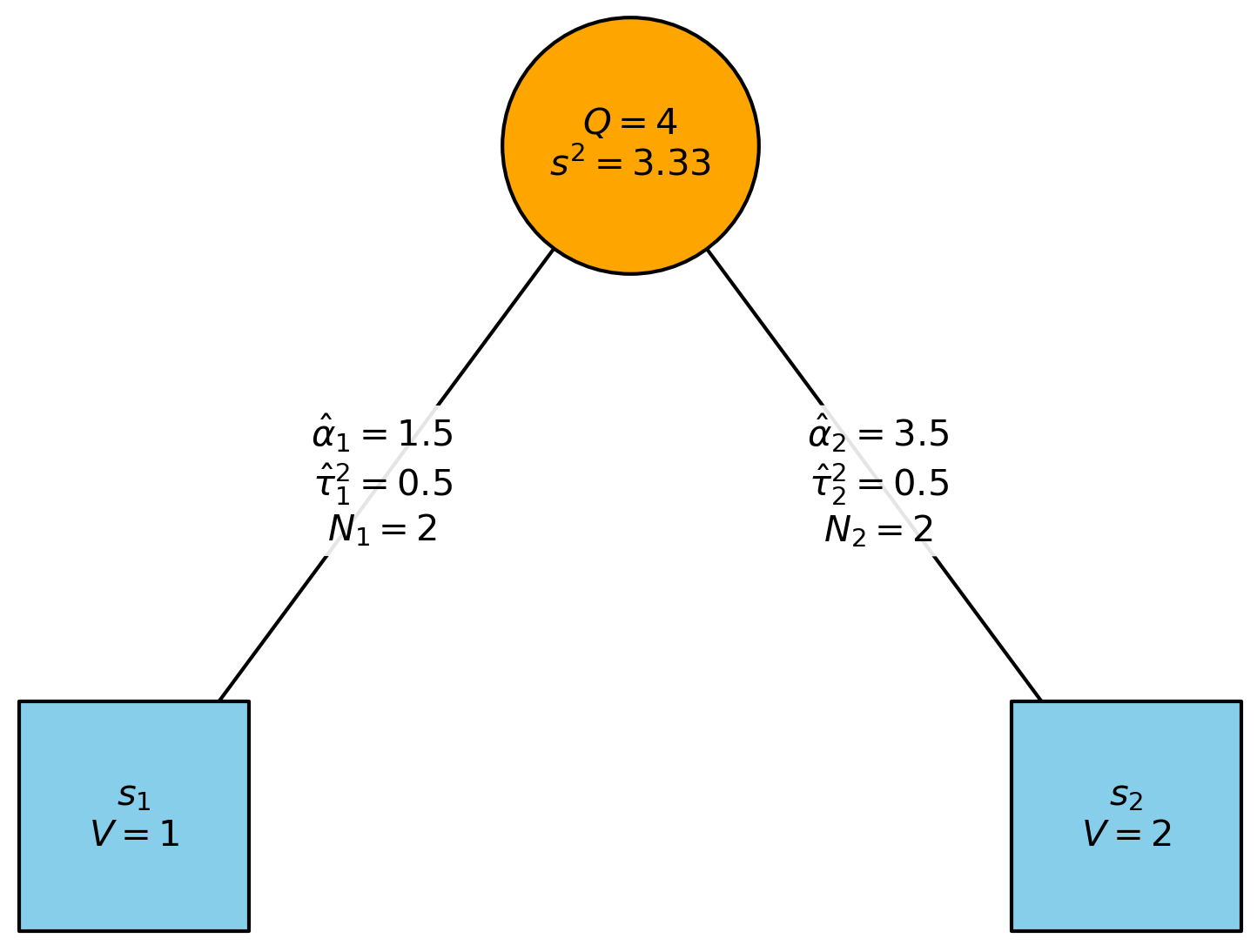}
\vspace{-0.4em}
\caption{Node-Local Statistics with random transitions into two successor states. Samples: \((r=1,s_1), (2,s_1), (3,s_2), (4,s_2)\).}
\label{fig:Method_Comparison}
\end{figure}

For transitions to \(s_1\), the rewards are \(r_1=[1,2]\), giving
\((\hat{\alpha}_1=1.5,\;\hat{\tau}_1^{2}=0.5)\);
for \(s_2\), the rewards are \(r_2=[3,4]\), giving
\((\hat{\alpha}_2=3.5,\;\hat{\tau}_2^{2}=0.5)\).  
With two visits to each state, \(N_1=N_2=2\) and
\(N=4\).
We thus apply \eqref{eq:val-update-Q}--\eqref{eq:s2-final} to get $\hat{Q}=0.5(1.5+1)+0.5(3.5+2)=4$, and $s^2 = \frac{1}{3}\hat{\tau}_1^{2}
  + \frac{2}{3}\bigl(\hat{\alpha}_1+\hat{V}(s_1)-\hat{Q}\bigr)^{2}+\frac{1}{3}\hat{\tau}_2^{2}
  +\frac{2}{3}\bigl(\hat{\alpha}_2+\hat{V}(s_2)-\hat{Q}\bigr)^{2} = 10/3$.

We now verify that this result agrees with the result from the Exact Updating method.
The four sample paths in Fig \ref{fig:Method_Comparison} yield augmented samples $x = r+\hat{V}$, resulting in values $\{1+1 = 2, 1+ 2 = 3, 3+2 = 5, 4+2 = 6\}$.
The sample mean is 4 and the sample variance is $10/3 \approx 3.33$.
In this example, the Exact Updating and the Node-Local Statistics variance are identical, demonstrating that we can fully recover the correct variance information using this method.
In summary, our proposed method is mathematically equivalent to the exact method by correctly accounting for updates in value estimates, while being nearly as efficient as the summary statistics method.
\section{Integration with MCTS Algorithms}\label{sec:LFOCBA}
In this section, we detail the integration of the consistent variance estimator and the node-local statistics method from Section~\ref{sec:efficient} into the MCTS framework. MCTS is a tree-structured algorithm widely used for finite-horizon MDPs~\cite{KoSz06,Co06,Fu18,Fu19} and serves as the core search mechanism in systems such as AlphaGo and AlphaZero~\cite{Si17}.

\begin{figure}[htbp]
\centering
  \includegraphics[width=0.5\textwidth]{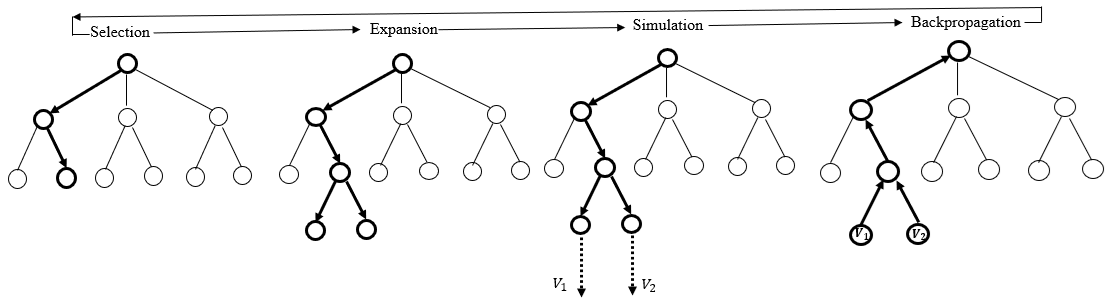}
  \caption{One MCTS iteration: selection $\rightarrow$ expansion $\rightarrow$ simulation $\rightarrow$ backpropagation.}
  \label{fig:mcts}
\end{figure}

A standard MCTS iteration consists of four phases: selection, expansion, simulation, and backpropagation (see Fig.~\ref{fig:mcts}).
During the \emph{selection} phase, the algorithm traverses the tree from the root by applying a selection policy until it reaches a leaf node that has at least one untried action.
In the \emph{expansion} phase, a new child node corresponding to one of these untried actions is added.
The \emph{simulation} (also known as roll-out) phase then executes a default policy from this new node to a terminal state to collect a sequence of rewards. The sum of these rewards constitutes a realization of the cumulative reward.
Finally, the \emph{backpropagation} phase updates the statistics along the traversed path using this cumulative reward.
This process repeats until the computational budget is exhausted, at which point the action with the highest estimated $Q$-value at the root is returned.

The selection policy balances exploration and exploitation to determine sample allocation. Variance-sensitive selection rules rely on accurate estimates of the $Q$-function's uncertainty to guide this allocation. We integrate the consistent variance estimator into two such rules: OCBA and UCB1-Tuned.

\subsection{The OCBA Selection Rule}
OCBA-MCTS~\cite{li2022optimal} allocates the sampling budget to maximize the probability of correct selection (PCS). At a node $s$, let $a^* = \arg\max_{a} Q(s,a)$ be the current best action and $\Delta_a = Q(s,a^*) - Q(s,a)$ be the value gap for suboptimal actions $a \neq a^*$. The asymptotically optimal sample allocation satisfies the following conditions:
\begin{align}
\frac{N(s,a^*)}{\zeta^2(s,a^*)} &= \sum_{a\neq a^*}\frac{N(s,a)}{\zeta^2(s,a)}, \label{eq:ocba-mdp-bestrest}\\
\frac{\zeta^2(s,a)}{\Delta_a^{2}} &= \frac{\zeta^2(s,a')}{\Delta_{a'}^{2}},\quad \forall a,a'\neq a^*. \label{eq:ocba-mdp-rest}
\end{align}
In the proposed implementation, the theoretical quantities $a^*$, $\Delta_a$, and $\zeta^2$ are replaced by their empirical counterparts $\hat{a}^*$, $\hat{\Delta}_a$, and the consistent variance estimator $\varConsistent(s,a)$ (or $\varIID(s,a)$ for the baseline).

To satisfy these conditions heuristically, the algorithm evaluates \eqref{eq:ocba-mdp-bestrest} at each step. If the ratio for the best action $N(s,\hat{a}^*)/\hat{\zeta}^2(s,\hat{a}^*)$ is less than the sum of the ratios for the suboptimal actions, the next sample is allocated to $\hat{a}^*$. Otherwise, the sample is allocated to the suboptimal action $u \neq \hat{a}^*$ that maximizes the ratio $\hat{\zeta}^2(s,u)/\hat{\Delta}_u^2$, thereby driving the allocation towards the equality condition in \eqref{eq:ocba-mdp-rest}.

\subsection{The UCB1-Tuned Selection Rule}
The UCB1-Tuned rule~\cite{auer2002finite} minimizes cumulative regret by incorporating the variance of the reward distribution into the Upper Confidence Bound (UCB) exploration term. Adapted for MCTS, the rule selects the action that maximizes:
\begin{align}
I(s,a) &= Q(s,a) \nonumber \\
&+ \sqrt{\frac{\ln N_p(s)}{N(s,a)} \min\Bigl(\tfrac{1}{4}, \sigma^2(s,a) + \sqrt{\tfrac{2\ln N_p(s)}{N(s,a)}}\Bigr)},
\end{align}
where $N_p(s)$ is the total visit count for state $s$, and $\sigma^2(s,a)$ represents the variance of the cumulative rewards observed after taking action $a$. We approximate $\sigma^2(s,a)$ using $N(s,a)\hat{\zeta}^2(s,a)$, utilizing either the consistent or baseline variance estimator.

Since UCB1-Tuned assumes rewards are bounded within $[0,1]$, we apply dynamic normalization.
We track the minimum and maximum $Q$-value estimates, $\hat{Q}_{\min}(t)$ and $\hat{Q}_{\max}(t)$, observed at each decision period $t$ (corresponding to the time step $t$ in the finite horizon) across the entire tree.
The range is approximated by $\hat W(t)=\max(\hat{Q}_{\max}(t)-\hat{Q}_{\min}(t), \varepsilon)$. The algorithm selects the action maximizing the normalized score:
\begin{align} \label{eq:ucb1}
\hat I(s,a) &= \frac{\hat Q(s,a)-\hat{Q}_{\min}(t)}{\hat W(t)} \nonumber \\
& \hspace{-2em} +\sqrt{\frac{\ln N_p(s)}{N(s,a)} \min\Bigl(\tfrac{1}{4}, \frac{N(s,a)\hat{\zeta}^2(s,a)}{\hat W^2(t)} + \sqrt{\tfrac{2\ln N_p(s)}{N(s,a)}}\Bigr)}.
\end{align}
Calculating $\hat W(t)$ globally for each period reduces the volatility associated with local, sparse estimates at individual nodes.
Furthermore, because the potential cumulative reward naturally decreases as the period $t$ approaches the horizon $H$, a single global normalization constant would excessively dampen exploration at later periods.
Periodwise normalization ensures the exploration term remains sensitive to the specific value scale at each step of the horizon.

\subsection{Implementation and Pseudocode}
Algorithm~\ref{alg:mcts_c} contains the pseudocode for the proposed consistent MCTS variants (denoted as ``\textsc{-C}''), utilizing the node-local statistics derived in Section~\ref{sec:efficient} to update the consistent variance estimator $\varConsistent$. The implementation stores the following statistics, all initialized to 0:
\begin{itemize}
    \item \textbf{State}: visit count $N(s)$ and value estimate $\hat{V}(s)$.
    \item \textbf{Action}: visit count $N(s,a)$, $Q$-value estimate $\hat{Q}(s,a)$, baseline variance $\varIID(s,a)$, and consistent variance $\varConsistent(s,a)$.
    \item \textbf{Transition}: transition count $N(s,a,s')$, reward sample mean $\hat{\alpha}(s,a,s')$, and reward sample variance $\hat{\tau}^2(s,a,s')$.
\end{itemize}
We set the warm-up parameter $n_0(s)$ to 4 for the root node $s_0$ and 2 for all other nodes. This ensures sufficient samples are collected before applying the variance-based selection rules.

\begin{algorithm}[htbp]
\caption{MCTS with Node-Local Statistics}
\label{alg:mcts_c}
\DontPrintSemicolon

\KwIn{budget $N$, root state $s_0$, selection rule $\texttt{rule}\in\{\text{``OCBA''},\ \text{``UCB1-Tuned''}\}$}
\KwOut{updated node, action, and transition statistics}
\For{$n \leftarrow 1$ \KwTo $N$}{
  $s \leftarrow s_0,\ \textit{path} \leftarrow [\,]$\;
  \tcp{Forward pass: selection / expansion / simulation}
  \While{$s$ is nonterminal}{
    \If{$s$ is not in tree}{
      Initialize statistics for $s$ ($N=0, \hat{V}=0$)\;
      \ForEach{$a \in A(s)$}{
        Initialize statistics for $(s,a)$ ($N=0, \hat{Q}=0, \varIID=0, \varConsistent=0$)\;
      }
    }
    $A_{\text{warm}} \leftarrow \{\,a \in A(s) : N(s,a) < n_0(s)\,\}$\;
    \eIf{$A_{\text{warm}} \neq \varnothing$}{
      Choose $a \in A_{\text{warm}}$ uniformly at random\;
    }{
      Choose $a$ via $\texttt{rule}$ using $\hat Q(s,\cdot)$ \& $\varConsistent(s,\cdot)$ \tcp*{Eqs. \eqref{eq:ocba-mdp-bestrest}--\eqref{eq:ocba-mdp-rest} or \eqref{eq:ucb1}}
    }
    $(s', r) \leftarrow \texttt{sample}(s,a)$ \tcp*{Simulate 1 step}
    Append $(s,a,r,s')$ to $\textit{path}$, $s \leftarrow s'$\;
  }
  \tcp{Backward pass: node-local backpropagation}
  \ForEach{$(s,a,r,s')\ \in\ \texttt{reversed}(\textit{path})$}{
    \If{$(s,a,s')$ is not in tree}{
       Initialize transition statistics ($N=0, \hat{\alpha}=0, \hat{\tau}^2=0$)\;
    }
    Update transition stats $N, \hat{\alpha}, \hat{\tau}^{2}$ via \eqref{eq:tau-alpha}\;
    $N(s,a) \leftarrow N(s,a) + 1$\;
    Update $\hat Q(s,a)$ via \eqref{eq:val-update-Q} using all successors\;
    Update $\varIID(s,a)$ via \eqref{eq:var-approx-s} using all successors\;
    Update $\varConsistent(s,a)$ via \eqref{eq:var-cst} using $\varIID(s,a)$ and $\varConsistent(s')$\;
    $N(s) \leftarrow N(s) + 1$, $\hat V(s) \leftarrow \max_{u} \hat Q(s,u)$\;
    $a^\star \leftarrow \arg\max_{u} \hat Q(s,u)$, $\varConsistent(s) \leftarrow \varConsistent(s,a^\star)$\;
  }
}
\end{algorithm}
\section{Numerical Examples} \label{sec:examples}
We compare two procedures for updating MCTS under the selection rules
OCBA and UCB1-Tuned: a baseline procedure, which follows the
OCBA-MCTS implementation of~\cite{li2022optimal} and
uses the summary-statistics method of Section~\ref{subsec:approx-update}
with the approximate variance estimator
$\varIID(s,a)$ in \eqref{eq:Var-approx}, and the
proposed consistent procedure (in Algorithm 1), which uses the node-local statistics
method of Section~\ref{subsec:suff-update} together with the consistent
variance estimator $\varConsistent(s,a)$ defined in
\Cref{eq:var-cst}. We conduct experiments on four test problems. The first problem is the inventory control problem used to test OCBA-MCTS in \cite{li2022optimal}. The other two problems change the parameters for the inventory cost structure and horizon to increase the difficulty of the problem. The last test problem is based on the kidney-paired donation (KPD) problem in \cite{yue2024kidney}. The performance of the procedures is evaluated by comparing their solutions with the true optimal solutions obtained by DP.

We evaluate performance of different MCTS algorithms via the empirical PCS (of the first action taken) as a function of cumulative sample size (the number of simulated sample paths). The empirical PCS is the percentage of runs in which the algorithm selects the optimal action for the initial state, computed over 1{,}000 independent replications (3{,}000 for the KPD problem). 

We also report the mean absolute percentage error (MAPE) of the variance estimators at the root node \(s_0\).
Specifically, we track the accuracy of the local variance component, measured by the scaled quantity
\(
  N(s_0, a^*)\,\varIID(s_0, a^*).
\)
We focus on this local component for two key reasons.
First, the total recursive variance \(\varConsistent(s,a)\) depends on the sample counts of all successor nodes throughout the tree.
These counts vary dynamically during the search, creating a ``moving target'' that makes convergence analysis noisy and difficult to interpret.
In contrast, Theorem~\ref{thm:approx-bias} provides a stable, sample-size-independent limit for the local component conditioned on the current value estimates.
Second, and more critically, this metric isolates the efficacy of the \emph{Node-Local Statistics} updating scheme proposed in Section~\ref{sec:efficient}.
The baseline \emph{Summary Statistics} method (Section~\ref{subsec:approx-update}) introduces bias, because it cannot retrospectively update the variance contribution of previously collected samples when successor value estimates \(\hat{V}(s')\) change.
By validating that our method accurately recovers the local variance, we demonstrate that the proposed updating scheme successfully eliminates this source of error, providing a necessary foundation for the consistent recursive estimator.

Results are plotted in Figs \ref{fig:IC_prob}–\ref{fig:KPD_val}. In the legend labels, the suffix “\textsc{-C}” denotes the algorithm using the consistent value/variance estimator.

\subsection{Inventory Control}

\subsubsection{Inventory-1} Consider a finite-horizon inventory model with i.i.d. demand, holding costs, and lost-sale penalties, but no ordering costs or sales revenue. The goal is to determine the optimal order quantity at the beginning of the horizon that minimizes the total cost at the end of the horizon with horizon length $H=3$ and initial inventory level $s_0=5$. 

\begin{itemize}
  \item \textbf{State space:} inventory level $s\in\{0,1,\dots,20\}$.
  \item \textbf{Actions:} order $a\in\{0,1,\dots,20-s\}$ units.
  \item \textbf{Demand:} each period $D\sim\mathrm{Unif}\{0,1,\dots,9\}$ i.i.d.
  \item \textbf{Period transition \& reward:}
        \begin{enumerate}
          \item After ordering, inventory level is $s+a$.
          \item Demand $D$ is realized.  
                \begin{itemize}
                  \item If $s+a\ge D$, reward (holding cost) is $-(s+a-D)$ and next state is $s+a-D$.
                  \item Otherwise: reward (lost-sale penalty) is $-10(D-(s+a))$ and next state is $0$.
                \end{itemize}
        \end{enumerate}
  \item \textbf{Optimal solution:} $a=4$ with $Q=-13.5$.
\end{itemize}
\textbf{Observations:} Results are plotted in Figs~\ref{fig:IC_prob} and \ref{fig:IC_val}. Using consistent estimators leads to performance improvement for both OCBA and UCB1-Tuned. For example, OCBA-MCTS-C reaches an empirical PCS of 0.80 at 5{,}000 rollouts, whereas OCBA–MCTS needs 18{,}000 rollouts, a speed-up of more than 3 times. OCBA-MCTS-C has the best performance. The variance estimation errors for the algorithms with the consistent procedure are lower, regardless of the selection algorithm, UCB1-Tuned or OCBA.

\begin{figure}[!t]
  \centering\includegraphics[width=.85\linewidth]{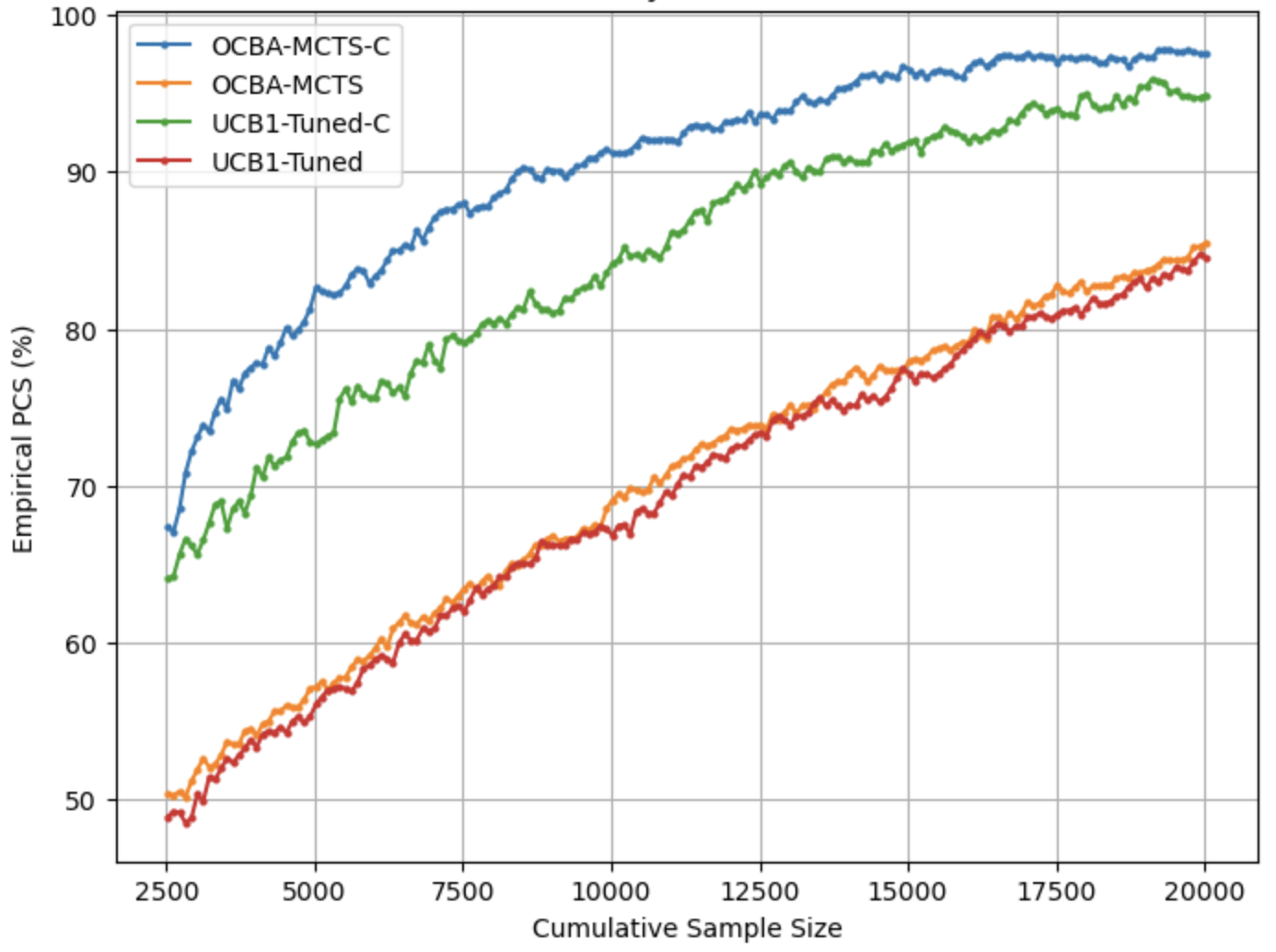}
  \caption{Inventory-Control-1 — Empirical PCS.}
  \label{fig:IC_prob}
\end{figure}

\begin{figure}[!t]
  \centering\includegraphics[width=.85\linewidth]{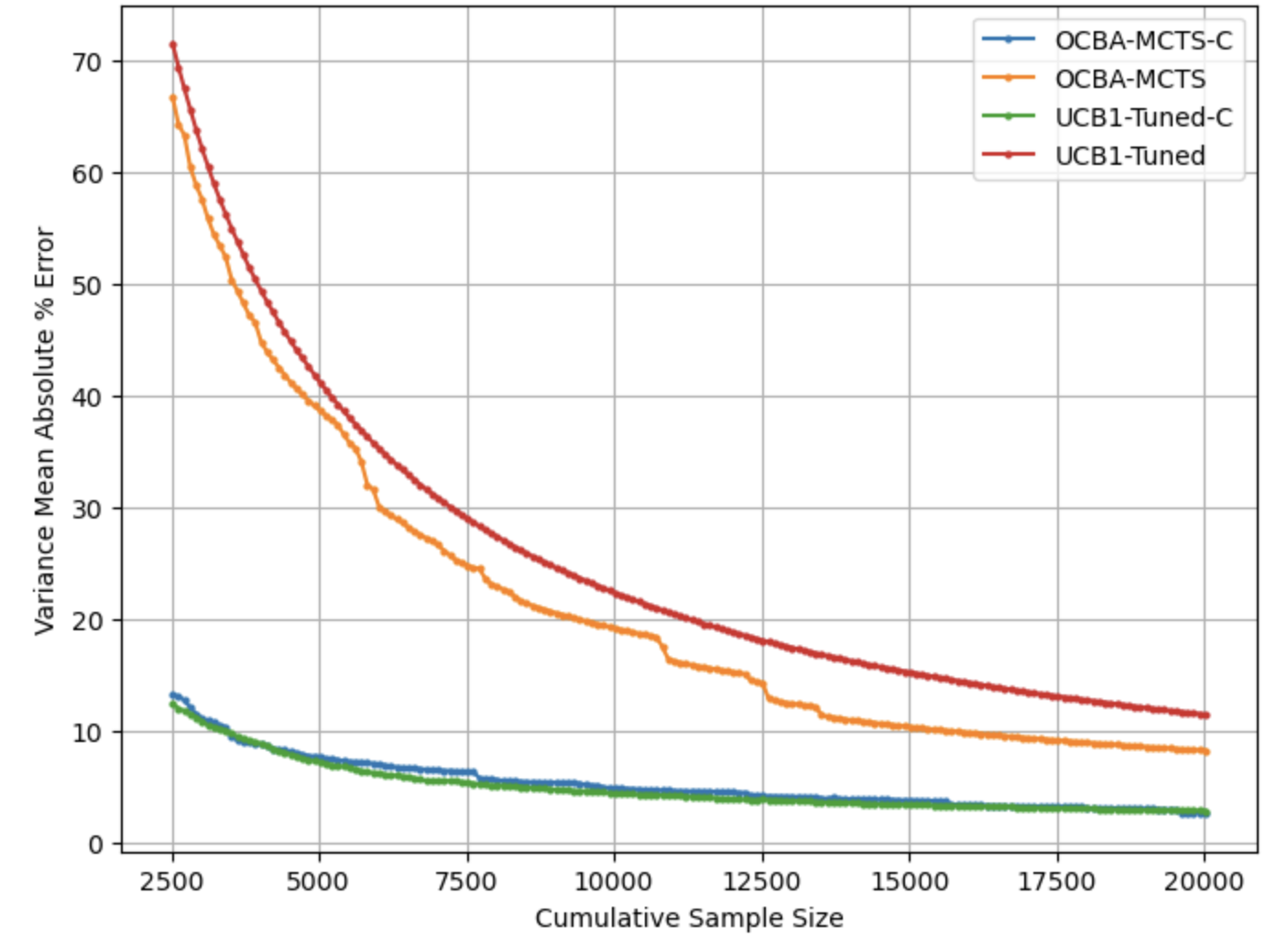}
  \caption{Inventory-Control-1 — Variance estimation error.}
  \label{fig:IC_val}
\end{figure}

\subsubsection{Inventory-2}
The setting is similar to the previous case, with the addition of ordering costs and sales revenues. The initial state is $s_0=3$ and horizon length is $H=5$.

\begin{itemize}
  \item \textbf{State space:} $s\in\{0,\dots,30\}$.
  \item \textbf{Actions:} order $a\in\{0,\dots,10\}$ subject to $s+a\le30$.
  \item \textbf{Demand:} $D\in\{3,6,9\}$ with equal probabilities.
  \item \textbf{Period transition \& reward:}
        \begin{enumerate}
          \item Order cost $c(a)=1\{a>0\}+8a$; 
          \item Demand $D$ is realized and the number of units sold is $\min(s+a,D)$.
          \item One-period reward is $10\cdot \min(s+a,D)-c(a)-0.2\cdot (s+a-D)^+$; next state is $(s+a-D)^+$.
        \end{enumerate}
  \item \textbf{Optimal solution:} $a=6$ with $Q=70.02$.
\end{itemize}

\textbf{Observations:}
Results are plotted in Figs~\ref{fig:IC2_prob} and \ref{fig:IC2_val}. This instance is harder, so overall PCS levels are lower, but using the consistent estimates again leads to significant performance gain for both OCBA and UCB1-Tuned. OCBA performs much better in terms of PCS than UCB1-Tuned, both with the consistent procedure and with the baseline. The variance estimation errors of both consistent procedures are very low, while the baseline procedures have higher estimation errors, especially the OCBA version.

\begin{figure}[!t]
  \centering\includegraphics[width=.85\linewidth]{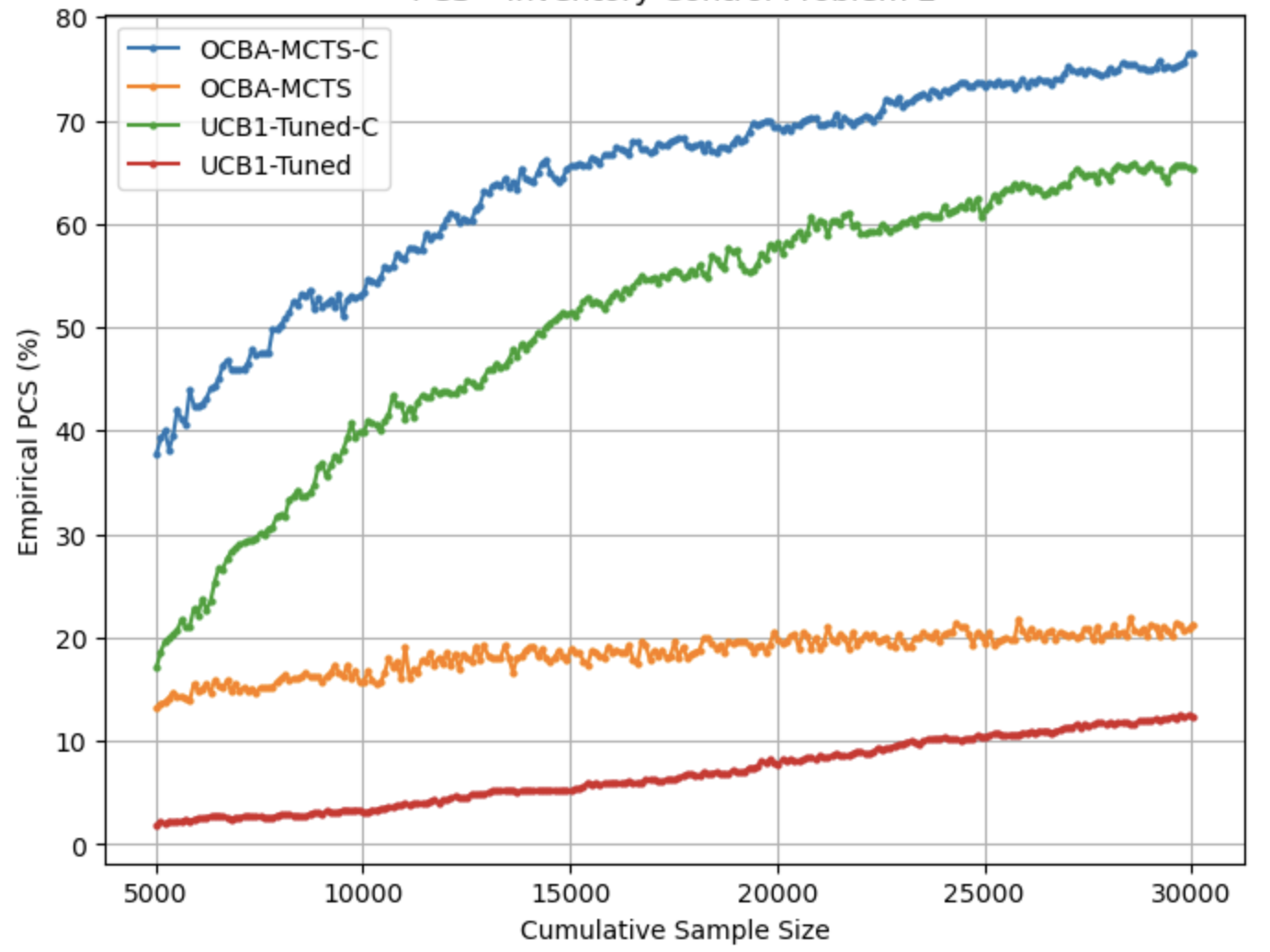}
  \caption{Inventory-Control-2 — Empirical PCS.}
  \label{fig:IC2_prob}
\end{figure}

\begin{figure}[!t]
  \centering\includegraphics[width=.85\linewidth]{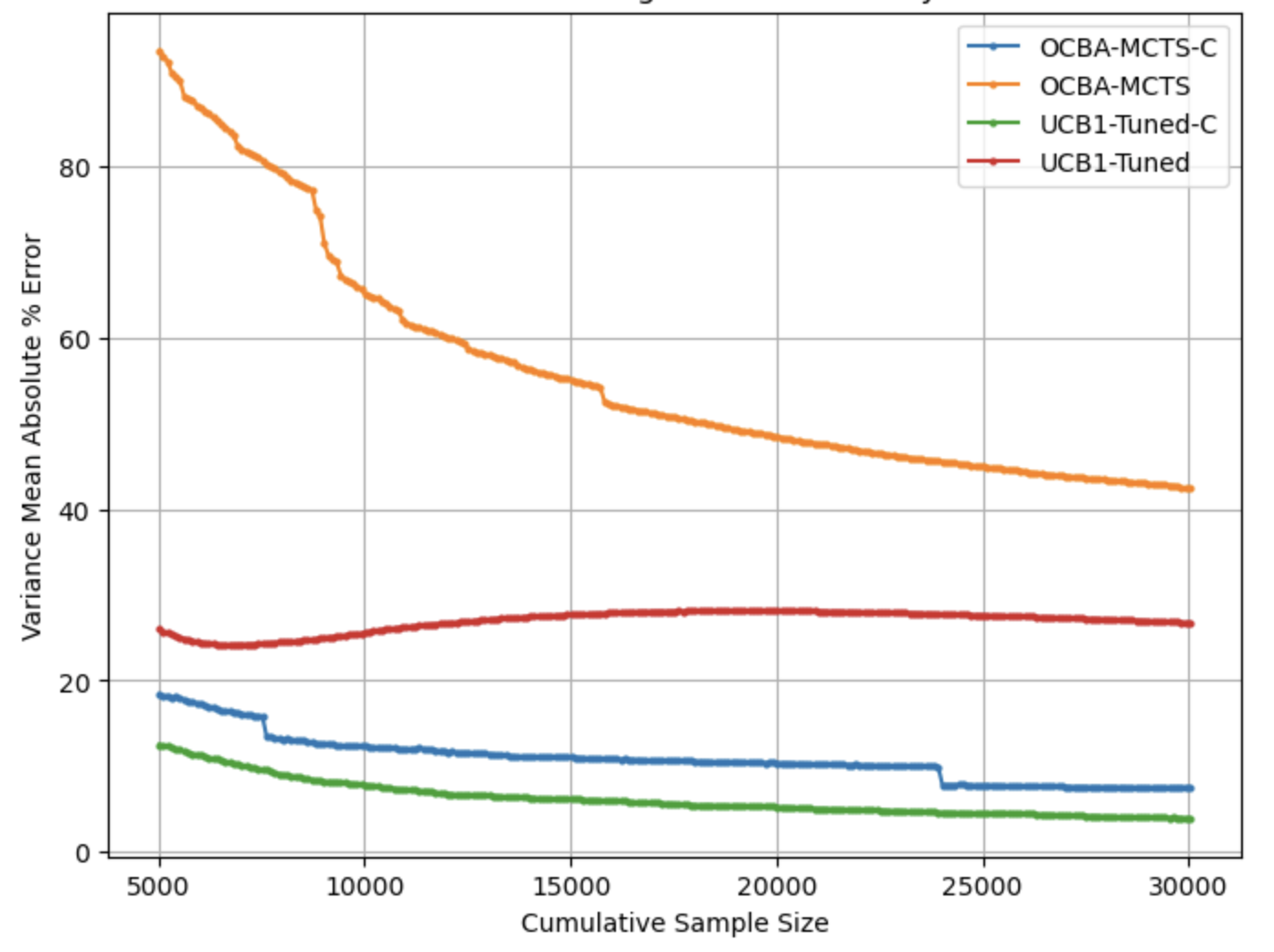}
  \caption{Inventory-Control-2 — Variance estimation error.}
  \label{fig:IC2_val}
\end{figure}

\subsubsection{Inventory-3}
Identical to Inventory Control 2 except for a longer horizon $H=10$. The optimal action is still 6, with a $Q$-value of 122.02.

\textbf{Observations:}
Results are plotted in Figs~\ref{fig:IC3_prob} and \ref{fig:IC3_val}. With a much longer horizon, this problem requires far more computational effort than the previous two cases, with only OCBA-MCTS-C achieving a PCS level exceeding 0.5 within the given budget. Using the consistent procedure again leads to significant performance improvement for both OCBA and UCB1-Tuned. The variance estimation error for the baseline OCBA-MCTS is high, while that for OCBA-MCTS-C is the lowest. Both UCB1-based algorithms maintain relatively low error compared to OCBA-MCTS; however, UCB1-Tuned-C starts with higher error than its baseline counterpart, but improves rapidly to overtake it and converge toward the accuracy of OCBA-MCTS-C after 15,000 samples.

\begin{figure}[!t]
  \centering\includegraphics[width=.85\linewidth]{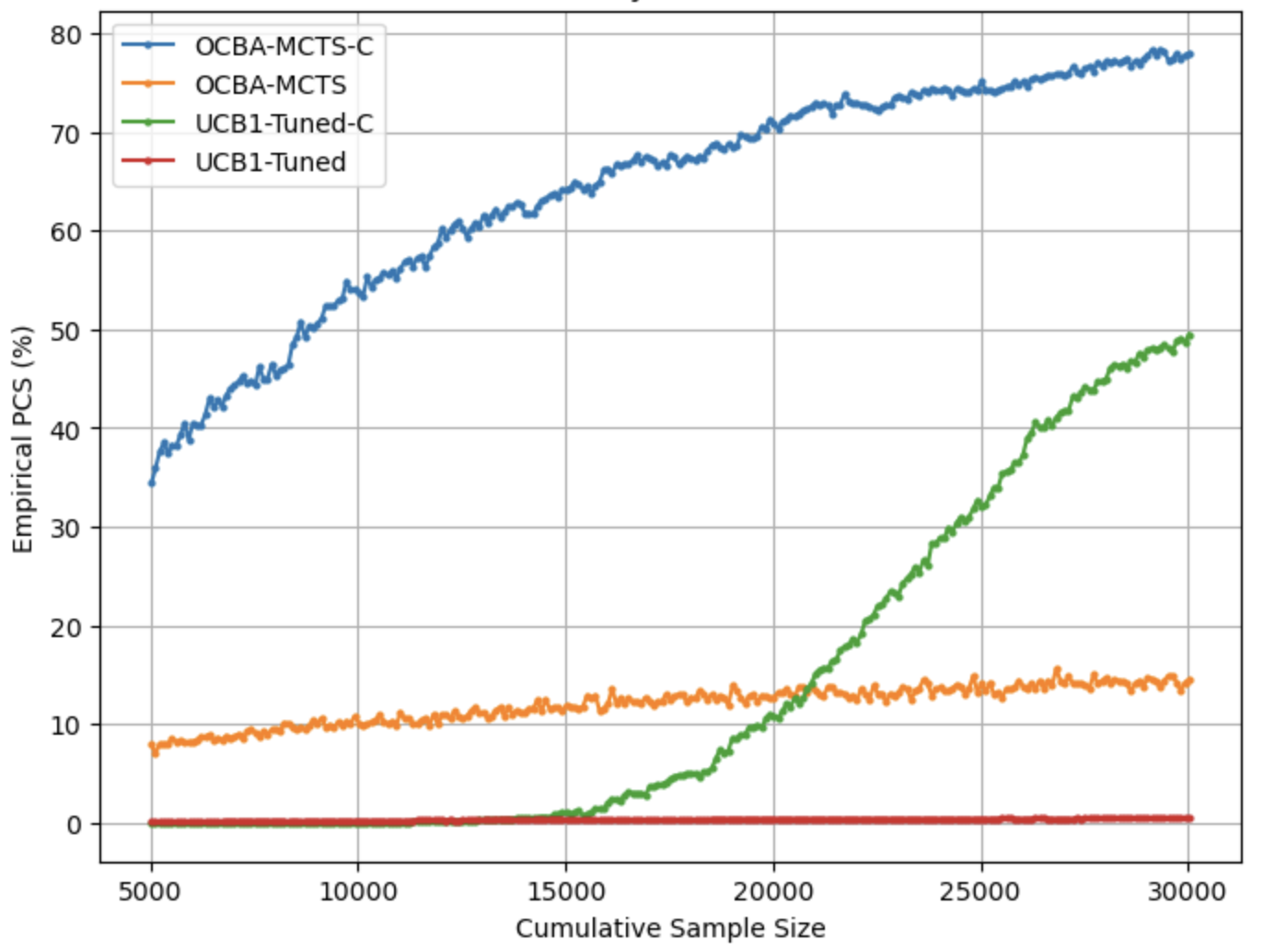}
  \caption{Inventory-Control-3 — Empirical PCS.}
  \label{fig:IC3_prob}
\end{figure}

\begin{figure}[!t]
  \centering\includegraphics[width=.85\linewidth]{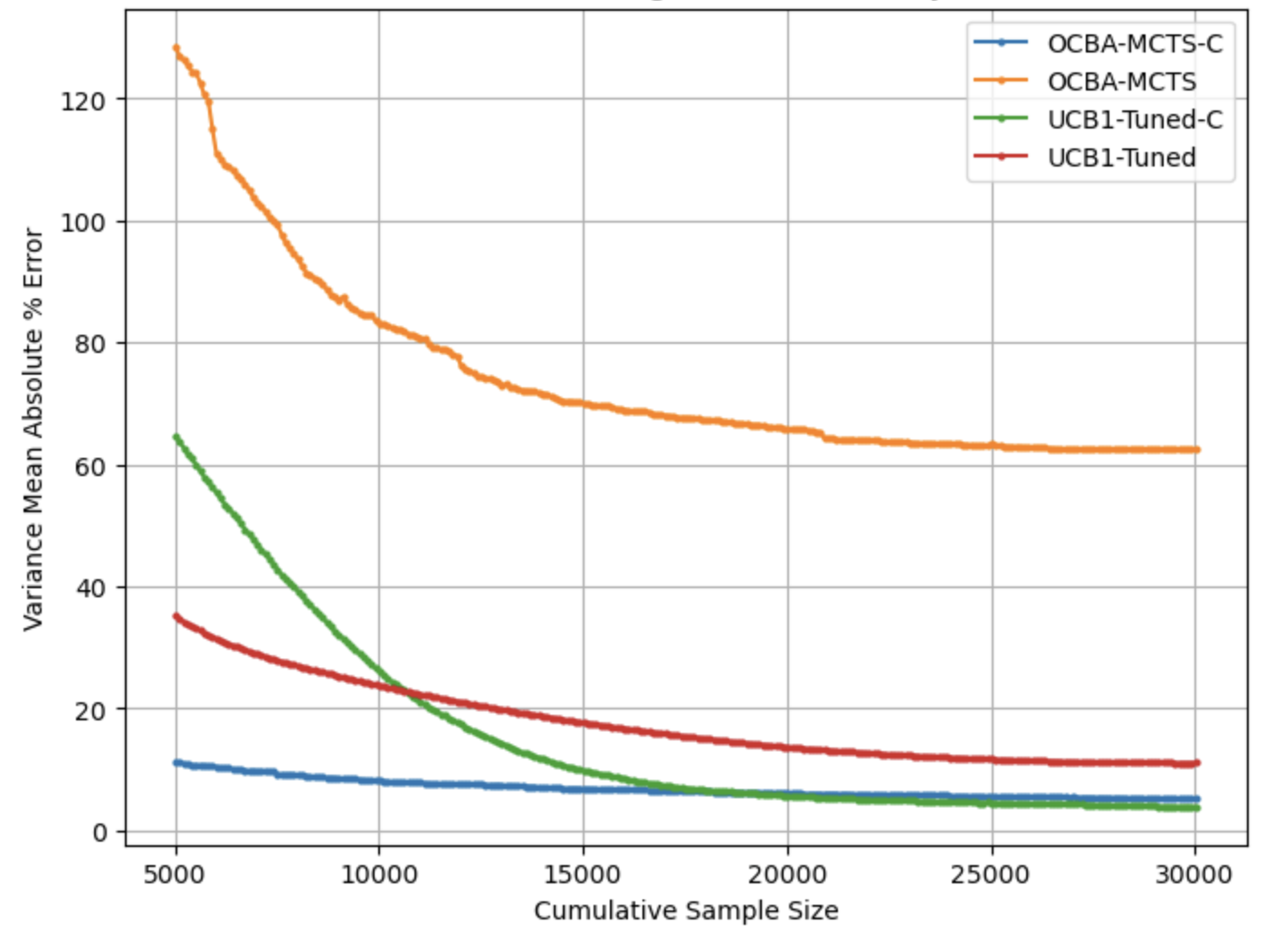}
  \caption{Inventory-Control-3 — Variance estimation error.}
  \label{fig:IC3_val}
\end{figure}

\subsection{Kidney-Paired Donation}

KPD is a transplant strategy that helps patients with willing but incompatible donors receive a kidney by exchanging donors with other incompatible pairs \cite{yue2024kidney}. In a typical KPD program, patients and their incompatible donors are matched with other pairs to form compatible exchanges, improving transplant opportunities.

We consider a finite-horizon MDP with horizon $H= 10$ for a simplified KPD setting with a two-class waiting list: urgent and stable patients. In each period, a single new patient arrives and is eligible to be matched with patients currently on the waiting list. The action in each period specifies which types of matches are permitted (e.g., urgent–urgent, urgent–stable, etc.). Each potential patient pair is independently compatible with probability (w.p) $p_{\text{match}}=0.3$. The objective is to determine the optimal matching policy, i.e., which match types to permit in each period, to maximize system performance given the current state of the waiting list.

\begin{itemize}
  \item \textbf{State:} $(n,u)$ = (\# stable patients in the waiting list, \# urgent patients in the waiting list); $s_0=(1,0)$.
  \item \textbf{Actions:} $a\in\{0,1,2,3\}$:
        \begin{itemize}
          \item 0: no match,
          \item 1: urgent–urgent only,
          \item 2: urgent–urgent and urgent–stable,
          \item 3: all match types.
        \end{itemize}
  \item \textbf{Arrivals:} one patient per period; the arrival is an urgent patient w.p. $0.3$ and stable w.p. $0.7$.
  \item \textbf{Matching \& transition:}  At the beginning of each period, a new patient arrives. Based on the assumption that every patient in the waiting list independently w.p. 0.3 is compatible with the newly arrived patient, and urgent patients are prioritized, it is easy to see that the newly arrived patient will be matched with an urgent patient w.p. $1-0.7^u$, a stable patient w.p. $0.7^u(1-0.7^n)$, or not matched w.p. $0.7^{u+n}$. If there is a match and it is allowed by the policy (depending on the choice of action), the paired transplantation occurs and the  waiting list is updated accordingly. If there is no match or the matched type is not allowed by the policy, the newly arrived patient joins the waiting list.

  \item \textbf{Reward:} $R(n',u')=-(n'+10u')$, where $(n',u') $ denotes the state after transition.
  \item \textbf{Optimal solution:} $a=2$ with $Q=-58.607$.
\end{itemize}

\begin{figure}[!t]
  \centering\includegraphics[width=.85\linewidth]{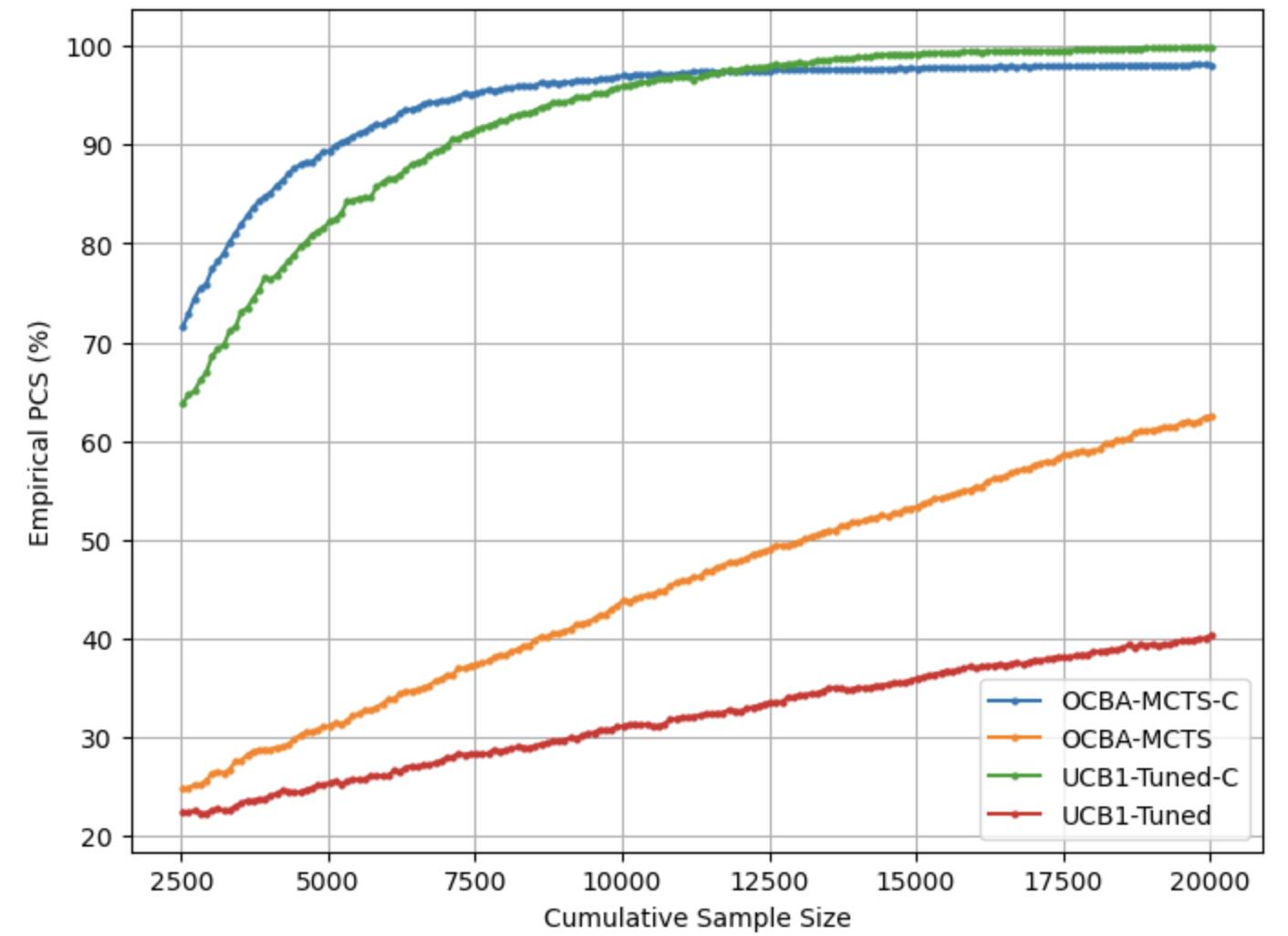}
  \caption{KPD — Empirical PCS.}
  \label{fig:KPD_prob}
\end{figure}

\begin{figure}[!t]
  \centering\includegraphics[width=.85\linewidth]{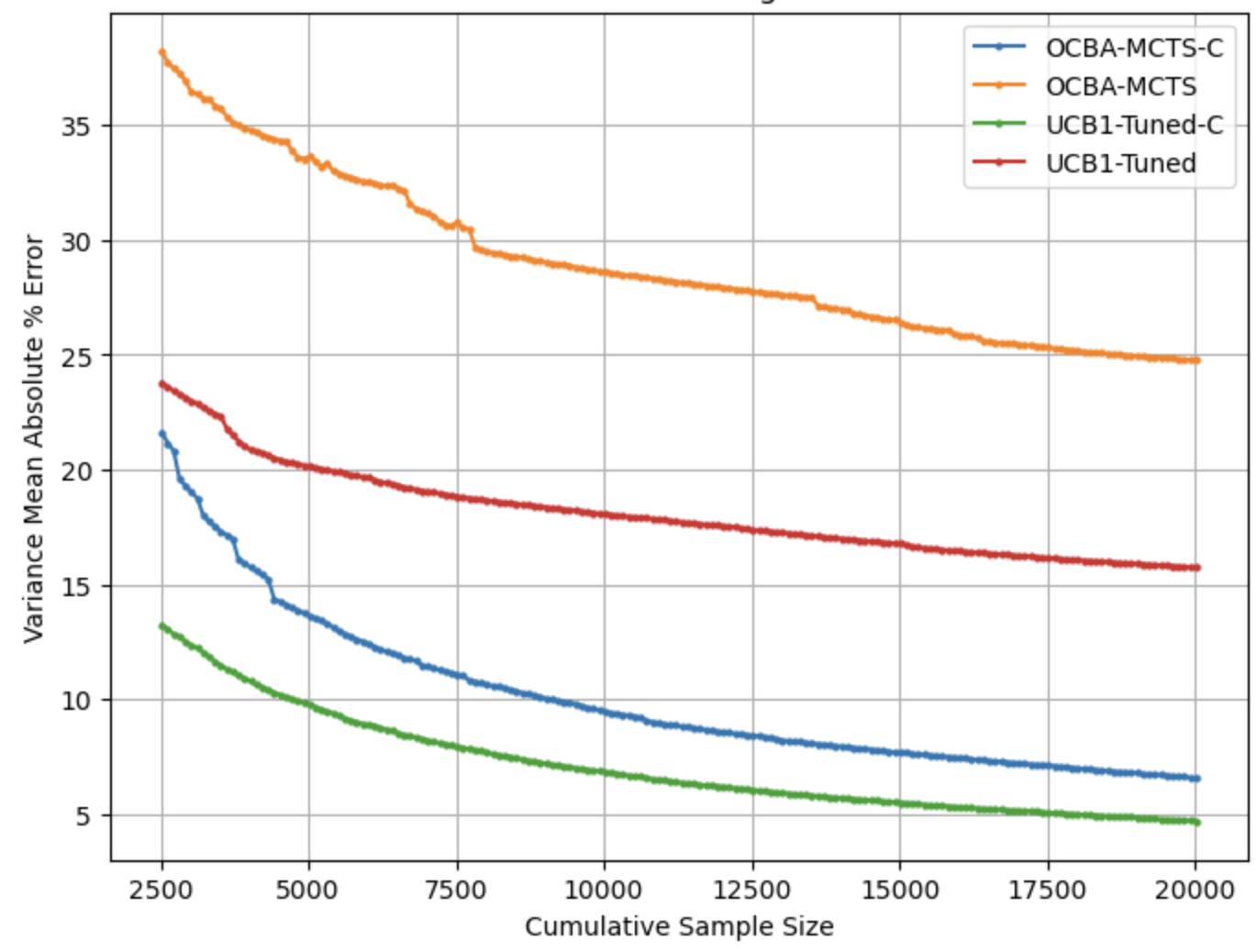}
  \caption{KPD — Variance estimation error.}
  \label{fig:KPD_val}
\end{figure}



\textbf{Observations:}
Results are plotted in Figs~\ref{fig:KPD_prob} and \ref{fig:KPD_val}. Similar to the results from the inventory control examples, using the consistent procedure leads to significant performance improvement for both OCBA and UCB1-Tuned. For this problem, OCBA–MCTS–C and UCB1-Tuned-C have similar performance, with OCBA-MCTS-C leading earlier but UCB1-Tuned catching up after 10,000 rollouts and then exceeding OCBA-MCTS-C by a small margin. For variance estimation, UCB1-Tuned-C outperforms OCBA-MCTS-C, but the consistent procedures are always better than their baseline counterparts.

\vspace{-4pt}
\subsection{Summary}
The consistent variance estimators deliver more accurate variance estimates in all test problems. The accurate variance estimates significantly improve both OCBA and UCB1-Tuned's performance across all test problems. OCBA is also shown to outperform UCB1-Tuned on PCS in almost all cases, especially when the problem is more difficult to solve. This is consistent with the findings in \cite{li2022optimal,fu2024tutorial} that a selection rule maximizing PCS is more effective in identifying the best action for a finite-horizon MDP.


\vspace{-4pt}
\section{Conclusion}\label{sec:conclusion}
In this paper, we investigated variance estimation for state--action value ($Q$-function) estimators in tree-structured algorithms for solving finite-horizon Markov decision processes (MDPs). Our main contributions include (i) a variance decomposition for $Q$-function estimators into three components arising from uncertainty in rewards, transition dynamics, and next-period value estimates; (ii) a consistent recursive variance estimator derived based on this decomposition; and (iii) an implementation of this estimator using node-local statistics that can be iteratively updated without storing full trajectories or rebuilding the tree.

We incorporated the new variance estimator into two MCTS selection rules based on OCBA and UCB1-Tuned. Numerical experiments on an inventory control problem and a kidney paired donation problem show that the consistent estimator substantially improves variance estimation accuracy and sample efficiency for both selection rules, resulting in more reliable and effective MCTS algorithms.

Because the proposed estimator depends only on local value and reward statistics at each node, it can be integrated with many existing tree-structured planning and reinforcement learning algorithms without changing their overall structure. Our analysis in this paper focused on finite-horizon, finite-state problems, and extending the approach rigorously to settings with large or continuous state and action spaces---for example, via function approximation---is an important direction for future work. Another promising direction would be using the consistent variance estimates to guide dynamic branching or early stopping decisions, explicitly optimizing the allocation of simulation budget across the planning horizon.

\vspace{-8pt}
\section*{References}

\vspace{-18pt}
\bibliographystyle{IEEEtran}   
\bibliography{references}      

\vspace{-22pt}
\begin{IEEEbiography}[{\includegraphics[width=1in,height=1.25in,clip,keepaspectratio]{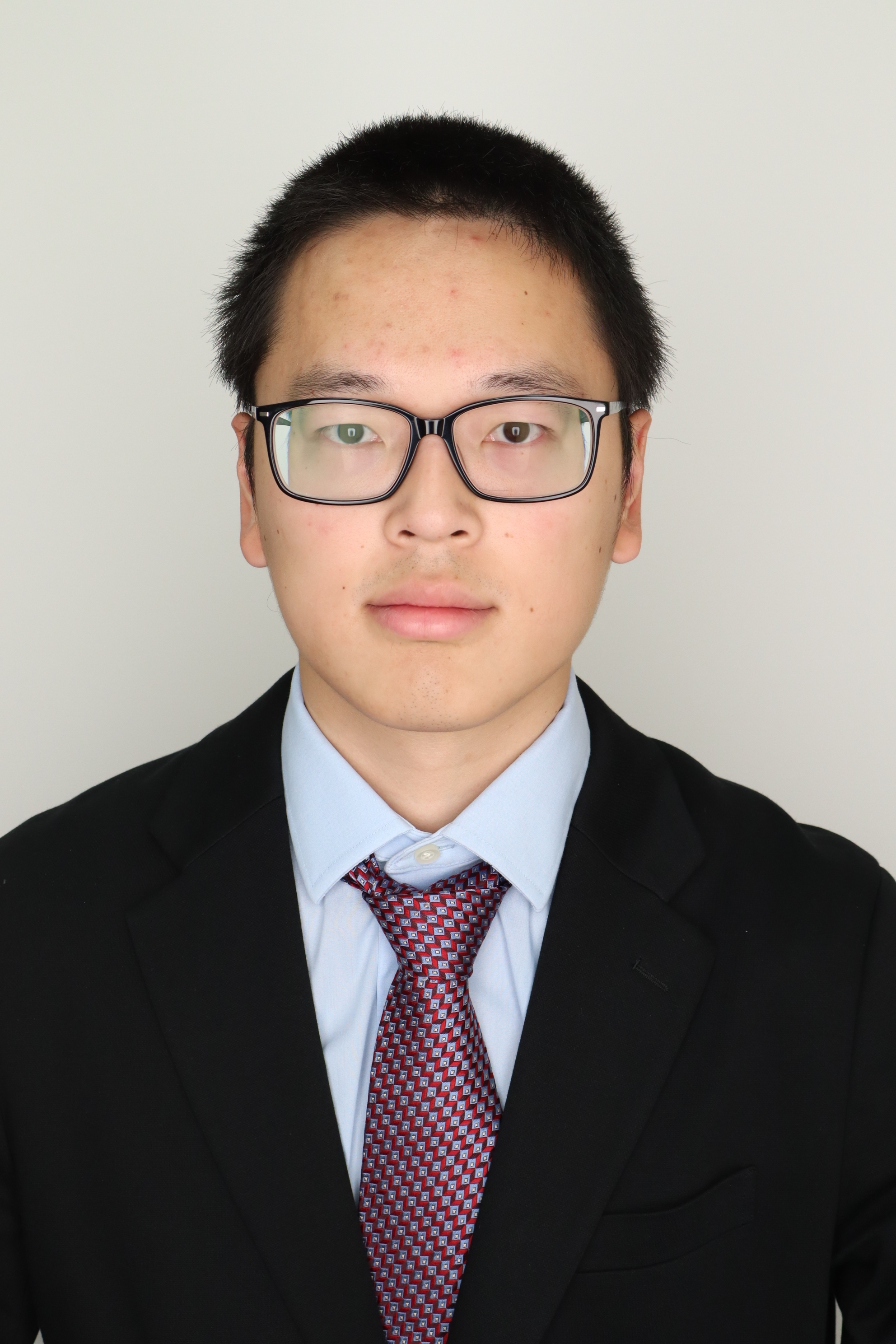}}]{Zhenyu Yue} received the bachelor's degree in mathematics and computer science from University of Maryland, College Park, MD, USA, in 2018. He is a Ph.D. candidate of applied math at University of Maryland, College Park, MD, USA. His research interests include stochastic processes and optimization.
\end{IEEEbiography}

\begin{IEEEbiography}{Jie Xu}(Senior Member, IEEE) received the B.S. degree in electrical engineering from Nanjing University, Nanjing, China, in 1999, the M.E. degree in electrical engineering from Shanghai Jiaotong University, Shanghai, China, in 2002, the M.S. degree in computer science from The State University of New York, Buffalo, NY, USA, in 2004, and the Ph.D. degree in industrial engineering and management sciences from Northwestern University, Evanston, IL, USA, in 2009. He is currently an Associate Professor of Systems Engineering and Operations Research with George Mason University, Fairfax, VA, USA. His research interests are data analytics, stochastic simulation and optimization, with applications in cloud computing, manufacturing, and power systems.
\end{IEEEbiography}

\begin{IEEEbiography}{Chun-Hung Chen} (Fellow, IEEE) received the Ph.D. degree in decision and control from Harvard University, Cambridge, MA, USA, in 1994. He is a Professor of Systems Engineering and Operations Research with George Mason University (GMU), Fairfax, VA, USA. Before joining GMU, he was an Assistant Professor with the University of Pennsylvania, Philadelphia, PA, USA. He was also a Professor of Electrical Engineering and Industrial Engineering with National Taiwan University, Taipei, Taiwan, from 2011 to 2014. He is the author of two books, including a best seller: Stochastic Simulation Optimization: An Optimal Computing Budget Allocation (World Scientific, 2011). Prof. Chen has served as a Department Editor for IIE Transactions and Asia–Pacific Journal of Operational Research, an Associate Editor for IEEE Transactions on Automation Science and Engineering, IEEE Transactions on Automatic Control, and Journal of Systems Science and Systems Engineering, an Area Editor for Journal of Simulation Modeling Practice and Theory, and an Advisory Editor for International Journal of Simulation and Process Modeling and Journal of Traffic and Transportation Engineering. 
Dr. Chen is a Fellow of the Institute for Operations Research and the Management Sciences (INFORMS).
\end{IEEEbiography}

\begin{IEEEbiography}{Hadi El-Amine} received the B.E. degree in Electrical and Computer Engineering (2009) and the M.E. degree in Engineering Management (2012) from the American University of Beirut, Beirut, Lebanon, and the Ph.D. degree in Industrial and Systems Engineering from Virginia Tech, Blacksburg, VA, USA (2016). He is an associate professor of Systems Engineering and Operations Research at George Mason University, Fairfax, VA, where his research focuses on stochastic optimization and its applications to healthcare and public policy. He is a member of the Institute of Operations Research and Management Science (INFORMS) and the Institute of Industrial and Systems Engineers (IISE).
\end{IEEEbiography}

\begin{IEEEbiography}{Michael C. Fu} (Fellow, IEEE) received degrees in mathematics and electrical engineering and computer science from MIT, Cambridge, MA, USA, in 1985, and the Ph.D. degree in applied math from Harvard University, Cambridge, MA, USA, in 1989. Since 1989, he has been with the University of Maryland, currently holding the Smith Chair of Management Science. He also served as Program Director of the Operations Research Program at the U.S.~National Science Foundation. His research interests include stochastic modeling, simulation optimization, and stochastic gradient estimation. Dr. Fu is a Fellow of the Institute for Operations Research and the Management Sciences (INFORMS).
\end{IEEEbiography}

\end{document}